\newtheorem{theorem}{Theorem}[section]
\newtheorem{problem}[theorem]{Problem}
\newtheorem{corollary}[theorem]{Corollary}
\newtheorem{lemma}[theorem]{Lemma}
\let\@fnsymbol\@arabic
\begin{document}
%\linenumbers
%\begin{frontmatter}
 
\title{On the Geodetic Hull Number of Complementary Prisms}

\author{
Erika M. M. Coelho$^{a,}$\thanks{E-mail: erikamorais@inf.ufg.br}, 
Hebert Coelho$^{a,}$\thanks{E-mail: hebert@inf.ufg.br}, \\
Julliano R. Nascimento$^{a,}$\thanks{E-mail: jullianonascimento@inf.ufg.br}, 
Jayme L. Szwarcfiter$^{b,c,}$\thanks{E-mail: jayme@nce.ufrj.br}}
\date{}
\maketitle

\begin{center}\vskip-17pt
$^a$ INF, Universidade Federal de Goi\'{a}s, GO, Brazil \\
$^b$ IM, COPPE, and NCE, UFRJ, RJ, Brazil \\
$^c$ IME, Universidade do Estado do Rio de Janeiro, RJ, Brazil \\
\end{center}

\hrule\vskip12pt
%\noindent \rule{\textwidth}{0.5pt}

\begin{abstract}
Let $G$ be a finite, simple, and undirected graph and let $S$ be a set of vertices of $G$. In the geodetic convexity, a set of vertices $S$ of a graph $G$ is \textit{convex} if all vertices belonging to any shortest path between two vertices of $S$ lie in $S$. The \textit{convex hull} $H(S)$ of $S$ is the smallest convex set containing $S$. If $H(S) = V(G)$, then $S$ is a \textit{hull set}. The cardinality $h(G)$ of a minimum hull set of $G$ is the \textit{hull number} of $G$. The complementary prism $G\overline{G}$ of a graph $G$ arises from the disjoint union of the graph $G$ and $\overline{G}$ by adding the edges of a perfect matching between the corresponding vertices of $G$ and $\overline{G}$. Motivated by the work of Duarte et al. \cite{duarte2015complexity}, we determine and present lower and upper bounds on the hull number of complementary prisms of trees, disconnected graphs and cographs. We also show that the hull number on complementary prisms cannot be limited in the geodetic convexity, unlike the $P_3$-convexity.
\end{abstract}

\noindent {\small {\bf Keywords:} Geodetic convexity, Hull number, Complementary prisms}

\vskip12pt\hrule

\section{Introduction} 
\label{sec:introduction}

In this paper we consider finite, simple, and undirected graphs, and we use standard terminology in Graph Theory. For a finite and simple graph $G$ with vertex set $V(G)$, a \textit{graph convexity} on $V(G)$ is a collection $\mathcal{C}$ of subsets of $V(G)$ such that $\emptyset, V(G) \in \mathcal{C}$ and $\mathcal{C}$ is closed under intersections.
The sets in $\mathcal{C}$ are called \textit{convex sets} and the \textit{convex hull} $H_{\mathcal{C}}(S)$ in $\mathcal{C}$, of a set $S$ of vertices of $G$, is the smallest set in $\mathcal{C}$ containing $S$. %caratheodory P3 chordal graphs

Through the concepts of graph convexity we can model, and eventually solve, problems in contexts that requires some disseminating process, such as contamination \cite{balogh1998random, bollobas2006art, dreyer2009irreversible}, marketing strategies \cite{domingos2001mining, kempe2003maximizing, kempe2005influential}, spread of opinion \cite{brunetti2012minimum, dreyer2009irreversible}, spread of influence \cite{kempe2003maximizing, khoshkhah2014dynamic} and distributed computing \cite{flocchini2004dynamic, hassin2001distributed, mustafa2004listen, peleg2002local}.

Some natural convexities in graphs are defined by a set $\mathcal{P}$ of paths in $G$, in a way that a set $S$ of vertices of $G$ is convex if and only if for every path $P = v_0v_1 \dots v_l$ in $\mathcal{P}$ such that $v_0$ and $v_l$ belong to $S$, all vertices of $P$ belong to $S$. %on the caratheodory number for the convexity of paths of order three

If we define $\mathcal{P}$ as the set of all shortest paths in $G$, we have the well-known \textit{geodetic convexity} \cite{caceres2006geodetic, dourado2016near, dourado2016pkfree, dourado2010hull, dourado2010some,  dourado2013caratheodory, everett1985hull, harary1981convexity}. 
The \textit{monophonic convexity} is defined by considering as $\mathcal{P}$ the set of all induced paths of $G$ \cite{costa2015inapproximability, dourado2010complexity, duchet1988convex, farber1986convexity} . The set of all paths of $G$ leads to the \textit{all paths convexity} \cite{changat2001all}. When $\mathcal{P}$ is the set of all triangle paths in $G$, then $\mathcal{C}$ is the \textit{triangle path convexity} \cite{changat1999triangle, dourado2016complexity}. %caratheodory chordal graphs
The $P_3$\textit{-convexity}  is defined by considering $\mathcal{P}$ the set of all paths of $G$ with three vertices \cite{barbosa2012caratheodory, campos2015graphs, centeno2013geodetic, coelho2014caratheodory, dourado2013caratheodory, duarte2015complexity, nascimento2016complexity, penso2015complexity}.

Our work considers $\mathcal{C}$ the geodetic convexity. Given a graph $G$, the \textit{closed interval} $I[u,v]$ of a pair $u,v \in V(G)$ consists of $u, v$, and all vertices lying in any shortest path between $u$ and $v$ in $G$. For a set $S \subseteq V(G)$, the \textit{closed interval} $I[S]$ is the union of all sets $I[u,v]$ for $u,v \in S$. If $I[S] = S$, then $S$ is a \textit{convex set}. 
 
The \textit{convex hull} $H_{\mathcal{C}}(S)$ of $S$ is the smallest convex set containing $S$. Since a graph $G$ uniquely determines its convexity $\mathcal{C}$, we may write $H(S)$, instead of $H_{\mathcal{C}}(S)$. 
 The convex hull $H(S)$ can be formed from the sequence $I^p[S]$, where $p$ is a nonnegative integer, $I^0[S] = S$, $I^1[S] = I[S]$, and $I^p[S] = I[I^{p-1}[S]]$, for $p \geq 2$. When, for some $p$, we have $I^q[S] = I^p[S]$, for all $q \geq p$, then $I^p[S]$ is a convex set.  
Let $S$ and $X$ be sets of vertices of a graph $G$. If $X \subseteq H(S)$, we say that $X$ is \textit{contaminated} by the vertices of $S$, or $S$ \textit{contaminates} $X$.
If $H(S) = V(G)$ we say that $S$ is a \textit{hull set} of $G$. 
The cardinality $h(G)$ of a minimum hull set of $G$ is called the \textit{geodetic hull number}  of $G$, or simply \textit{hull number} of $G$. 

Everett and Seidman \cite{everett1985hull} introduced the concept of \textit{hull number} considering the geodetic convexity. For some later results see e.g. \cite{buckley1990distance, caceres2010geodetic, canoy2005hull,  canoy2006convexity, chartrand2000hull, dourado2010hull, hernando2005steiner}.  
Related to complexity aspects, Dourado et al. \cite{dourado2009computation} proved that, given a graph $G$ and an integer $k$, to decide whether the hull number of $G$ is at most $k$ is a problem NP-complete. Araujo et al. \cite{araujo2011hull} showed that the same problem is NP-complete even in bipartite graphs. Among other results, Dourado, Penso and Rautenbach \cite{dourado2016pkfree} proved that the hull number is NP-hard in $P_9$-free graphs.

%falar sobre cografos?

It is worth mentioning that the studies of hull number were also extended for other graph convexities. For example, for general graphs, the hull number can be determined in polynomial time in triangle path convexity \cite{dourado2016complexity} and monophonic convexity \cite{dourado2010complexity}.
In $P_3$-convexity, the problem for general graphs is NP-complete \cite{centeno2011irreversible}. However, Duarte et al. \cite{duarte2015complexity} showed that the hull number of complementary prisms can be determined efficiently.

The \textit{complementary prisms} were introduced by Haynes et al. \cite{haynes2007complementary} as a variation of the well-known \textit{prism} of a graph \cite{hammack2011handbook}. For a graph $G$ with vertex set $V(G) = \{v_1, \dots , v_n\}$  and edge set $E(G)$, the \textit{complementary prism} of $G$ is the graph denoted by $G\overline{G}$ with vertex set $V(G\overline{G}) = \{v_1, \dots , v_n\} \cup \{\overline{v}_1, \dots , \overline{v}_n\}$ and edge set $$E(G\overline{G}) = E(G) \cup \{\overline{v}_i\overline{v}_j : 1 \leq i < j \leq n \mbox{ and } v_iv_j \notin E(G)\} \cup \{v_1\overline{v}_1, \dots , v_n\overline{v}_n\}.$$

Let $G$ be a graph and $\overline{G}$ its complement. For every vertex $v \in V(G)$ we denote $\overline{v} \in V(\overline{G})$ as its \textit{corresponding vertex}. 
In other words, the complementary prism $G\overline{G}$ of $G$ arises from de disjoint union of the graph $G$ and its complement $\overline{G}$ by adding the edges of a perfect matching joining corresponding vertices of $G$ and $\overline{G}$.

%The motivation for our work comes from Duarte \cite{duarte2015prismas}.  
Duarte \cite{duarte2015prismas} determined the geodetic hull number for complementary prisms $G\overline{G}$ when $G$ is a path, a cycle, or a complete graph, and proved that the $P_3$-hull number of complementary prisms $G\overline{G}$ is limited when $G$ and $\overline{G}$ are connected. We extend these results by determining and presenting lower and upper bounds on the hull number for complementary prisms $G\overline{G}$ when $G$ is a tree, a disconnected graph or a cograph. We also prove that the geodetic hull number on the complementary prism $G\overline{G}$ cannot be limited when $G$ and $\overline{G}$ are connected graphs, unlike what happens in $P_3$-convexity.

This paper is divided in more three sections. In Section \ref{sec:preliminaries} we define the fundamental concepts and terminology. In Section \ref{sec:results} we present our results. We close with the conclusions and future works in Section \ref{sec:conclusions}.

\section{Preliminaries}
\label{sec:preliminaries}

Let $G$ be a graph. Given a vertex $v \in V(G)$, the \textit{open neighborhood} of $v$ is the set of neighbors of $v$, denoted by $N_G(v)$. % $N_G(v) = \{v \in V(G): (u,v) \in E(G)\}$. 
The \textit{closed neighborhood} of $v$, denoted by $N_G[v]$, is the set $N_G[v] = N_G(v) \cup \{v\}$. The \textit{open neighborhood} of a set $U \subseteq V(G)$, denoted by $N_G(U)$, is the set of vertices of $V(G) \setminus U$ which are adjacent to some vertex of $U$. 

 A \textit{clique} of a graph $G$ is a subset of pairwise adjacent vertices in $G$. We say that $v$ is a simplicial vertex of $G$ if $N_G[v]$ induces a clique. According to Everett and Seidman \cite{everett1985hull}, every hull set of a graph $G$ contains the set of all simplicial vertices of $G$, as stated in Lemma \ref{lema:simpliciais}.

\begin{lemma}[Everett and Seidman \cite{everett1985hull}]
\label{lema:simpliciais}
For every hull set $S$ of a graph $G$, $S$ contains the set of all simplicial vertices of $G$.
\end{lemma}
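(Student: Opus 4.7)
The plan is to prove the contrapositive: if $v$ is a simplicial vertex of $G$ and $v \notin S$, then $v \notin H(S)$, and hence $S$ cannot be a hull set.

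The core observation I would establish first is that a simplicial vertex $v$ is never an internal vertex of any shortest path between two other vertices. Suppose, for contradiction, that $v$ lies strictly between $u$ and $w$ on some shortest $u$--$w$ path $P$. Then $P$ contains two neighbors $x, y$ of $v$ with $x$ immediately before $v$ and $y$ immediately after $v$. Since $v$ is simplicial, $N_G(v)$ induces a clique, so $xy \in E(G)$. Replacing the subpath $x,v,y$ of $P$ by the edge $xy$ produces a $u$--$w$ walk of length $|P|-1$, contradicting the fact that $P$ is a shortest path. Hence $v$ cannot be an internal vertex of any shortest path, which immediately gives $v \notin I[a,b]$ whenever $a,b \in V(G) \setminus \{v\}$.

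With this in hand, I would proceed by induction on $p$ to show that if $v \notin S$ then $v \notin I^p[S]$ for every $p \geq 0$. The base case $p = 0$ is the assumption $v \notin S$. For the inductive step, assume $v \notin I^{p-1}[S]$; then every pair $a, b \in I^{p-1}[S]$ satisfies $a, b \neq v$, so by the observation above $v \notin I[a,b]$, and taking the union over all such pairs gives $v \notin I[I^{p-1}[S]] = I^p[S]$. Since $H(S) = I^p[S]$ for some $p$, we conclude $v \notin H(S)$, so $H(S) \neq V(G)$ and $S$ is not a hull set.

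The argument is largely routine, so there is no real obstacle; the only delicate point is the single shortcut step showing $v$ is not an internal vertex of a shortest path, which is where the simpliciality hypothesis is used. Everything else is a clean induction on the interval iteration.
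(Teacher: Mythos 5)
Your proof is correct and is the standard argument: a simplicial vertex can never be an internal vertex of a geodesic (the clique on its neighborhood gives a shortcut), so by induction on $p$ it never enters $I^p[S]$ unless it is in $S$. The paper itself states Lemma~\ref{lema:simpliciais} as a cited result of Everett and Seidman without reproducing a proof, but the reasoning you give is exactly the one the authors reuse inside the proof of Lemma~\ref{lema:simpliciaisPrisma} (``since $u$ is simplicial in $G$, $N_G[u]$ is a clique, therefore any shortest path joining $x$ and $y$ does not contain $u$''), so there is no discrepancy to report.
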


Given two vertices $u$ and $v$ of a graph $G$, the \textit{distance} $d_G(u,v)$ is the length of a path linking $u$ and $v$ with minimum number of edges. We say that $v$ is \textit{reachable} from $u$, if there exists a path $P$ from $u$ to $v$ in $G$. Let $P = u_0u_1\dots u_k$ a path in $G$. For every vertex $u_i$, for $1 \leq i \leq k$, we say that the vertex $u_{i-1}$ is a \textit{predecessor} of $u_i$  in $P$.

A graph $G$ is called \textit{connected} if any two of its vertices are linked by a path in $G$. Otherwise, $G$ is called \textit{disconnected}. A maximal connected subgraph of $G$ is called a \textit{connected component} or \textit{component} of $G$.
A component $G_i$ of a graph $G$ is \textit{trivial}, if  $|V(G_i)| = 1$, and \textit{nontrivial} otherwise.
The greatest distance between any two vertices in $G$ is the \textit{diameter} of $G$, denoted by $diam(G)$.

As described in Introduction, in \cite{duarte2015prismas}, Duarte shows results on the geodetic hull number for complementary prisms, that follow below.

\begin{theorem}[Duarte \cite{duarte2015prismas}]\label{theo:duarte}
Let $G$ be a graph.
\begin{enumerate}
\item If $G = K_n$, then $h(G\overline{G}) = n$, for $n \geq 2$;
\item If $G = P_n$, then $h(G\overline{G})=
\begin{cases}
3,&\mbox{if}\quad n = 3,\\
2, &\mbox{otherwise}.
\end{cases}$
\item If $G = C_n$, then $h(G\overline{G})=
\begin{cases}
2,&\mbox{if}\quad n \geq 6,\\
3, &\mbox{otherwise}.
\end{cases}$
\end{enumerate}
\end{theorem}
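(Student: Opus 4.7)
\emph{Part 1.} For $G=K_n$, the graph $\overline{G}$ is edgeless, so in $G\overline{G}$ each $\overline{v}_i$ has the sole neighbor $v_i$ and hence is simplicial; by Lemma \ref{lema:simpliciais}, $h(G\overline{G})\ge n$. For the matching upper bound I would verify that $S=\{\overline{v}_1,\dots,\overline{v}_n\}$ is itself a hull set: since $\overline{v}_i\overline{v}_j$ is never an edge (the complement is empty), the unique shortest $\overline{v}_i$--$\overline{v}_j$ path is $\overline{v}_i v_i v_j \overline{v}_j$ of length $3$, so a single application of the interval operator to $S$ already produces every $v_k$, and we obtain $h(G\overline{G})=n$.

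\emph{Part 2.} I would treat $n=3$ separately from $n\ge 4$. For $n=3$ a short direct analysis shows that $\overline{v}_2$ is the unique simplicial vertex of $G\overline{G}$, and by inspecting the five candidates for a second generator one sees that no two-element set is a hull set, while $\{\overline{v}_1,\overline{v}_2,\overline{v}_3\}$ is, giving $h=3$. For $n\ge 4$ I would first argue that $G\overline{G}$ has no simplicial vertex, so $h\ge 2$, and then produce an explicit two-element hull set. A natural candidate is a pair of vertices of $G$ whose distance in $G\overline{G}$ equals $3$ and is realised by two internally disjoint geodesics, one through $G$ and one through the matching together with $\overline{G}$; for $P_4$ the pair $\{v_1,v_n\}$ already works because the two geodesics $v_1 v_2\cdots v_n$ and $v_1\overline{v}_1\overline{v}_n v_n$ coexist, while for $n\ge 5$ the $G$-geodesic is no longer shortest, so I would instead take a pair such as $\{v_1,v_{n-1}\}$ and check by iteration that after the first step the interval contains enough structure in both copies to contaminate the remaining $\overline{v}_k$ through length-$2$ paths of the form $v_k\overline{v}_k\overline{v}_j$.

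\emph{Part 3.} For $G=C_n$ the case $n=3$ coincides with Part 1 since $C_3=K_3$, giving $h=3$. For $n\in\{4,5\}$ I would establish $h\ge 3$ by ruling out every pair, using the automorphism group of $C_n\overline{C_n}$ to reduce the number of orbits to check, and then exhibit a three-element hull set by direct inspection. For $n\ge 6$ I would construct a two-element hull set by choosing $\{v_1, v_k\}$ with $k$ roughly opposite to $1$ (with parity-dependent adjustments), so that there are geodesics of length $3$ both through the cycle and through a single matching-complement-matching shortcut; the union of these geodesics seeds the second iteration well enough to contaminate the whole complementary prism via length-$2$ paths $v_k\overline{v}_k\overline{v}_j$ and their symmetric counterparts.

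\emph{Main obstacle.} The decisive step in each nontrivial case is to classify every $u$--$v$ shortest path in $G\overline{G}$ for the chosen pair $\{u,v\}$, since a geodesic may cross the matching up to twice and distances in $G\overline{G}$ are not determined by distances in $G$ or in $\overline{G}$ alone. In particular the verification for Part 2 with $n\ge 5$ and for Part 3 with $n\ge 6$ is likely to split into a couple of residue classes of $n$ and to require a careful geodesic enumeration, which I expect to be the most delicate part of the argument; by contrast, the lower-bound half of each item is handled quite uniformly via Lemma \ref{lema:simpliciais} or a small brute-force check.
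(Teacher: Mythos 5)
The paper does not actually prove this statement: it is quoted from Duarte's dissertation \cite{duarte2015prismas} and used as a black box (e.g.\ in Theorems \ref{theo:arvoresGeodesica} and \ref{theo:Gconexo}), so there is no in-paper proof to compare against. Judged on its own terms, your Part 1 is correct and complete, and your treatment of $P_3$, $P_4$ and the small cycles is plausible. The guiding principle you articulate --- choose two vertices of $G$ whose distance in $G\overline{G}$ is $3$ and is realised both by a geodesic inside $G$ and by the detour $u\,\overline{u}\,\overline{v}\,v$ --- is exactly the right one; it is the same device the paper itself uses for non-star trees in Theorem \ref{theo:arvoresGeodesica}, where the seed pair is chosen at distance exactly $3$ in $T$.

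The gap is that your concrete choices violate that principle. For $P_n$ with $n\ge 6$ you take $S=\{v_1,v_{n-1}\}$; but $d_{P_n}(v_1,v_{n-1})=n-2\ge 4$, so the only geodesic between them in $P_n\overline{P}_n$ is $v_1\overline{v}_1\overline{v}_{n-1}v_{n-1}$, whence $I[S]=\{v_1,v_{n-1},\overline{v}_1,\overline{v}_{n-1}\}$. This four-vertex set is convex: $d(v_1,\overline{v}_{n-1})=d(v_{n-1},\overline{v}_1)=2$ with unique middle vertices $\overline{v}_1$ and $\overline{v}_{n-1}$ respectively, and $\overline{v}_1\overline{v}_{n-1}$ is an edge of $\overline{P}_n$. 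So $S$ is not a hull set --- the same failure mode you correctly diagnosed for $\{v_1,v_n\}$, merely shifted by one index. The fix is to take a pair at distance exactly $3$ in $P_n$, e.g.\ $\{v_1,v_4\}$, so that $v_2,v_3$ enter $I[S]$ alongside $\overline{v}_1,\overline{v}_4$ and the iteration can start. The same defect occurs in Part 3: for $n\ge 8$ a vertex $v_k$ ``roughly opposite'' $v_1$ has $d_{C_n}(v_1,v_k)>3$, so again $I[\{v_1,v_k\}]$ is the convex set $\{v_1,v_k,\overline{v}_1,\overline{v}_k\}$; you need $k=4$ (or $k=n-2$) so that the cycle contributes a second geodesic. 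With those corrected seeds the remaining contamination claims are believable but still only sketched, and, as you anticipate, constitute the real work.
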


\section{Results} 
\label{sec:results}

This section is intended to present our results. 
We begin by showing in Lemma \ref{lema:simpliciaisPrisma} a property related to simplicial vertices in a graph $G$ and in its complement $\overline{G}$.

\begin{lemma}
\label{lema:simpliciaisPrisma}
Let $G$ be a graph. If $u$ is a simplicial vertex in $G$ and $\overline{u}$ is a simplicial vertex in $\overline{G}$, then every hull set $S$ of $G\overline{G}$ intersects $\{u,\overline{u}\}$.
\end{lemma}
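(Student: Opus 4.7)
The plan is to argue by contradiction: assume $S$ is a hull set of $G\overline{G}$ with $\{u,\overline{u}\} \cap S = \emptyset$, and prove by induction on $k \geq 0$ that $\{u,\overline{u}\} \cap I^k[S] = \emptyset$. Such an invariant directly contradicts $H(S) = V(G\overline{G})$, and the base case is exactly the hypothesis on $S$, so only the inductive step is substantive.

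For the inductive step, I would assume $u, \overline{u} \notin I^{k-1}[S]$ and derive a contradiction from the assumption that $u$ enters $I^k[S]$. In that case $u$ is an internal vertex $z_j$ of some shortest $w$-$x$ path $P = z_0z_1\cdots z_m$ whose endpoints satisfy $w, x \in I^{k-1}[S] \subseteq V(G\overline{G}) \setminus \{u, \overline{u}\}$. The predecessor $z_{j-1}$ and successor $z_{j+1}$ must belong to $N_{G\overline{G}}(u) = N_G(u) \cup \{\overline{u}\}$. If both lie in $N_G(u)$, simpliciality of $u$ in $G$ gives $z_{j-1}z_{j+1} \in E(G)$, and shortcutting contradicts that $P$ is shortest. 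Otherwise one of them, say $z_{j-1}$, equals $\overline{u}$. Since $\overline{u} \neq w$, the vertex $z_{j-2}$ exists and, being a neighbor of $\overline{u}$ other than $u$, has the form $\overline{v}_i \in N_{\overline{G}}(\overline{u})$, so $v_iu \notin E(G)$. Writing $z_{j+1} = v_s \in N_G(u)$, the facts $v_su \in E(G)$ and $v_iu \notin E(G)$ force $i \neq s$, and a direct check yields $d_{G\overline{G}}(\overline{v}_i, v_s) = 2$ (through $\overline{v}_s$ when $v_iv_s \notin E(G)$, through $v_i$ when $v_iv_s \in E(G)$); this lets me replace the length-$3$ subsegment $\overline{v}_i\overline{u}uv_s$ of $P$ by a length-$2$ walk, again contradicting shortestness. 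The statement that $\overline{u} \notin I^k[S]$ follows by the analogous argument, using that $\overline{u}$ is simplicial in $\overline{G}$ and that the roles of $G$ and $\overline{G}$ in the construction $G\overline{G}$ are interchangeable.

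The main obstacle is the subcase where $\overline{u}$ sits on $P$ adjacent to $u$: there is no single-edge shortcut, so one must look two steps back along $P$, identify the appropriate vertex $\overline{v}_i$, and exploit the rigid distance structure of $G\overline{G}$ (any two non-matched vertices $\overline{v}_i, v_s$ lie at distance exactly $2$) to shrink a length-$3$ segment to length $2$. Once this subcase is handled, the remainder is routine case-checking.
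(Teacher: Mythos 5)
Your proof is correct and follows essentially the same strategy as the paper's: simpliciality of $u$ in $G$ (resp.\ $\overline{u}$ in $\overline{G}$) rules out contamination along a shortest path staying in $G$ (resp.\ $\overline{G}$), while the dense cross-structure of $G\overline{G}$ rules out contamination along a path using the matching edge $u\overline{u}$. Your explicit induction on $I^k[S]$ and the length-$3$-to-length-$2$ shortcut for the segment $\overline{v}_i\overline{u}uv_s$ are a more careful rendering of the paper's informal claims that $u$ and $\overline{u}$ cannot be ``contaminated simultaneously'' and cannot be contaminated by a mixed pair $x\in V(G)$, $\overline{y}\in V(\overline{G})$.
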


\begin{proof}
Let $G$ be a graph. Let $u$ be a simplicial vertex in $G$ and $\overline{u}$ be a simplicial vertex in $\overline{G}$. Suppose, by contradiction, that there exists a hull set $S$ of $G\overline{G}$ such that $S \cap \{u,\overline{u}\} = \emptyset$.

Since $S$ is a hull set of $G\overline{G}$, we have that $u,\overline{u} \in H(S)$.
Since the distance between any vertex of $G$ to any vertex of $\overline{G}$ is at most two, any path containing $\{u,\overline{u}\}$ has distance at least $3$. Then, $u$ and $\overline{u}$ were not contaminated simultaneously. %Thus, we suppose that $u$ and $\overline{u}$ were contaminated separately.
Furthermore, we have that $I[x,\overline{y}]$, with $x \in V(G) \setminus \{u\}$ and $\overline{y} \in V(\overline{G}) \setminus \{\overline{u}\}$, does not contain $u$ nor $\overline{u}$. So, the vertex $u$ or $\overline{u}$ were not contaminated by $x$ and $\overline{y}$. Thus, the vertex $u$ ($\overline{u}$) was contaminated by a vertex of $G$ ($\overline{G}$).

\bigskip
\noindent {\bf Case 1:} {\it The vertex $u$ was contaminated by $x,y \in V(G) \setminus \{u\}$.}
\bigskip

\noindent Consider that $x,y \in V(G) \setminus \{u\}$. Since $u \in I[x,y]$, then $xy \notin E(G)$. But since $u$ is simplicial in $G$, we have that $N_G[u]$ is a clique, therefore any shortest path joining $x$ and $y$ does not contain $u$. Thus, we have that $u \notin H(S)$, a contradiction.

\bigskip
\noindent {\bf Case 2:} {\it The vertex $\overline{u}$ was contaminated by $\overline{x},\overline{y} \in V(\overline{G}) \setminus \{\overline{u}\}$.}
\bigskip

\noindent Consider that $\overline{x},\overline{y} \in V(\overline{G}) \setminus \{\overline{u}\}$. Analogously to Case 1, since $\overline{u} \in I[\overline{x},\overline{y}]$, then $\overline{x}\overline{y} \notin E(\overline{G})$. But since $\overline{u}$ is simplicial $\overline{G}$, we have that $N_{\overline{G}}[\overline{u}]$ is a clique, then any shortest path linking $\overline{x}$ and $\overline{y}$ does not contain $\overline{u}$, therefore $\overline{u} \notin H(S)$, a contradiction.

\bigskip
So, $S$ is a hull set of  $G\overline{G}$, then $S \cap \{u,\overline{u}\} \neq \emptyset$.
\end{proof}

\subsection{Trees} 
\label{subsec:trees}

Now, we proceed with the results of the hull number for the complementary prism $G\overline{G}$ when $G$ is a tree. A special case of tree, called \textit{star} $S_n$ is the complete bipartite graph $K_{1,n}$.

\begin{theorem}
\label{theo:arvoresGeodesica}
Let $T$ be a tree and consider an integer $n \geq 3$. Then:  
$$h(T\overline{T})=
\begin{cases}
n+1,&\mbox{if}\quad T = S_n,\\
2, &\mbox{otherwise}.
\end{cases}$$
\end{theorem}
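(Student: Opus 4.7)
My plan splits into two cases depending on whether $T$ is a star. For $T = S_n$ with $n \geq 3$, let $c$ be the center and $\ell_1, \ldots, \ell_n$ the leaves. The lower bound $h(T\overline{T}) \geq n+1$ follows from two simplicial arguments. Because $c$ is universal in $T$, the vertex $\overline{c}$ is isolated in $\overline{T}$ and its only neighbor in $T\overline{T}$ is $c$, so $\overline{c}$ is simplicial in $T\overline{T}$ and Lemma~\ref{lema:simpliciais} forces $\overline{c}$ into every hull set. Each leaf $\ell_i$ is simplicial in $T$ (its closed $T$-neighborhood is $\{\ell_i,c\}$) and $\overline{\ell_i}$ is simplicial in $\overline{T}$ (the vertices $\overline{\ell_1},\ldots,\overline{\ell_n}$ induce a $K_n$ in $\overline{T}$), so Lemma~\ref{lema:simpliciaisPrisma} forces every hull set to meet each of the $n$ disjoint pairs $\{\ell_i,\overline{\ell_i}\}$; combined with $\overline{c}$, this yields $|S|\geq n+1$. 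The matching upper bound is realized by $S = \{\overline{c},\overline{\ell_1},\ldots,\overline{\ell_n}\}$: the unique shortest path from $\overline{c}$ to each $\overline{\ell_i}$ in the prism is $\overline{c}-c-\ell_i-\overline{\ell_i}$ of length three, so $I[S] = V(T\overline{T})$.

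For $T$ non-star, the internal vertices of $T$ induce a subtree on at least two vertices, so there is an internal edge $xy$. Pick $u \in N_T(x) \setminus \{y\}$ and $v \in N_T(y) \setminus \{x\}$; the path $u-x-y-v$ has length three in $T$ and $uv \notin E(T)$. I claim $S = \{u,v\}$ is a hull set, which with the trivial lower bound $h \geq 2$ gives $h(T\overline{T}) = 2$. A case analysis on the intermediate vertices shows $d_{T\overline{T}}(u,v) = 3$ with exactly two shortest paths, namely $u-x-y-v$ and $u-\overline{u}-\overline{v}-v$, so $I[S] = \{u,x,y,v,\overline{u},\overline{v}\}$. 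Since $xv, yu \notin E(T)$, the unique shortest paths $x-\overline{x}-\overline{v}$ and $y-\overline{y}-\overline{u}$ give $\overline{x},\overline{y} \in I^2[S]$.

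The core of the argument is the third iteration, which I claim adds every $\overline{w}$ with $w \in V(T) \setminus \{x,y\}$. For each such $w$ I would show that $\overline{w}$ is the midpoint of a length-two shortest path between two vertices of $I^2[S]$: the path $\overline{x}-\overline{w}-\overline{u}$ when $w \notin N_T[x] \cup N_T[u]$; the path $\overline{y}-\overline{w}-\overline{v}$ when $w \notin N_T[y] \cup N_T[v]$; or the path $\overline{x}-\overline{w}-\overline{y}$ when $w \notin N_T[x] \cup N_T[y]$. A routine case check exploiting that $T$ is a tree (so adjacent vertices share no common neighbor, and $d_T(u,v) = 3$ forces $N_T(u) \cap N_T(v) = \emptyset$) shows that every $w \in V(T) \setminus \{x,y\}$ satisfies at least one of the three conditions; hence $V(\overline{T}) \subseteq I^3[S]$. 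To finish, for each $w \in V(T)$ with a neighbor $z$ already in the convex set, the unique shortest path $\overline{w}-w-z$ forces $w$ into the next interval, and iterating along edges of the connected tree $T$ starting from the seeds $\{u,x,y,v\}$ eventually adds every remaining vertex.

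The delicate point is verifying the covering claim in the third step. In particular, when $T$ is a double-star one has $V(T) = N_T[x] \cup N_T[y]$, so $d_{T\overline{T}}(\overline{x},\overline{y}) = 3$ and the third type of length-two path becomes useless; the first two intervals must then cover every leaf of both centers on their own, which a direct neighborhood computation confirms.
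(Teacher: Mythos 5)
Your proposal is correct and follows essentially the same route as the paper's proof: the star lower bound via the simplicial vertex $\overline{c}$ together with Lemma~\ref{lema:simpliciaisPrisma} applied to the $n$ pairs $\{\ell_i,\overline{\ell_i}\}$, the matching upper bound from the $n$ length-three geodesics $\overline{c}\,c\,\ell_i\,\overline{\ell_i}$, and for non-stars the hull set $S=\{u,v\}$ with $d_T(u,v)=3$ expanded through $I[S]$, $I^2[S]$, $I^3[S]$ and then back into $T$ along tree edges. Your third-iteration covering argument (conditions on $N_T[x]\cup N_T[u]$ and $N_T[y]\cup N_T[v]$) is just a slight reorganization of the paper's $W$/$Z$ split, and your verification that acyclicity plus $d_T(u,v)=3$ rules out a vertex failing both conditions is sound.
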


\begin{proof}
Consider the cases: $T$ is a star or not.

First, if $T$ is a star, let  $V(T)=\{u_0,u_1,u_2,\dots,u_n\}$ and $E(T) = \{u_0u_i : 1 \leq i \leq n \}$, where $n \geq 3$. Let $S$ be a hull set of $T\overline{T}$.

By the definition of $T$, we have that $u_0 \in T$ is adjacent to $u_i$, for every $1 \leq i \leq n$, then $\overline{u}_0 \in \overline{T}$ has no adjacent vertices in $\overline{T}$. This implies that $\overline{u}_0$ is a simplicial vertex in $V(G\overline{G})$. Then, by Lemma \ref{lema:simpliciais}, $S$ contains $\overline{u}_0$, therefore $h(T\overline{T}) \geq 1$.

Still by the definition of $T$, the vertex $u_i$ is simplicial in $T$, for every $1 \leq i \leq n$, and $\overline{u}_i$ is simplicial in $\overline{T}$, for every $1 \leq i \leq n$. This way, Lemma  \ref{lema:simpliciaisPrisma}  implies that $S$ contains at least one vertex from each set $U_i = \{u_i,\overline{u}_i\}$, for every $1 \leq i \leq n$, then $h(T\overline{T}) \geq n+1$.

For the upper bound, consider $S = \{\overline{u}_0,\overline{u}_1,\overline{u}_2,\dots,\overline{u}_n\}$. We have that $d_{T\overline{T}}(\overline{u}_0,\overline{u}_i) = 3$, for every $1 \leq i \leq n$. Since $u_0$ and $u_i$ belong to a shortest path from $\overline{u}_0$ to $\overline{u}_i$, then $u_0,u_i \in I[S]$. Thus, $V(T\overline{T}) = I[S] = H(S)$, therefore $S$ is a hull set of $T\overline{T}$. Since $|S| = n+1$, then $h(T\overline{T}) = n+1$, when $T$ is a star.

For second equality, since $h(P_n\overline{P}_n) = 2$, for $n \neq 3$ (Theorem \ref{theo:duarte}), and $P_3$ is a star, suppose that $|V(T)| \geq 5$.

Let $T \neq P_n$. Since $T$ is not a star and $|V(T)| \geq 5$, there exist vertices $u,v \in T$ such that $d_T(u,v) \geq 3$. Consider $S = \{u,v\}$ such that $u,v \in V(T)$ and the distance between $u$ and $v$ in $T$ is equal to $3$. 
Since $d_T(u,v) = 3$, there exist $x,y \in V(T)$ such that $x$ and $y$ belong to a shortest path joining $u$ and $v$, let $P_1 = u, x, y, v$. The vertices $\overline{u},\overline{v} \in V(\overline{T})$ also belong to a shortest path joining $u$ and $v$, let $P_2 = u, \overline{u}, \overline{v}, v$. Then $I[S] \supseteq \{u,v,x,y,\overline{u},\overline{v}\}$.
Since $d_{T\overline{T}}(x,\overline{v}) = 2$ and $d_{T\overline{T}}(y,\overline{u}) = 2$, we have that $I^2[S] \supseteq \{u,v,x,y,\overline{u},\overline{v},\overline{x},\overline{y}\}$.

Since $ux \in E(T)$, $\overline{u}\overline{x} \notin E(\overline{T})$, then $d_{T\overline{T}}(\overline{u},\overline{x}) \geq 2$. 
Let $W = \{w \in V(T) : w \notin N_{T}(\{u,x\})\setminus\{y\} \}$ and $\overline{W} = \{ \overline{w} \in V(\overline{T}) : \overline{w} \mbox{ is corresponding to } w \in W \}$.
For every vertex $w \in W$, we have that  every $\overline{w} \in \overline{W}$ is adjacent to $\overline{u}$ and $\overline{x}$, thus $\overline{w}$ belongs to a shortest path between $\overline{u}$ and $\overline{x}$, consequently $\overline{W} \subseteq I^3[S]$. 
Since $yv \in E(T)$, $\overline{y}\overline{v} \notin E(T)$, then $d_{T\overline{T}}(\overline{y},\overline{v}) \geq 2$. 
Let $Z = N_T(\{u,x\})\setminus\{y\}$ and $\overline{Z} = \{ \overline{z} \in V(\overline{T}) : \overline{z} \mbox{ is corresponding to } z \in Z \}$.
For every vertex $z \in Z$,  $z$ is not adjacent to $y,v$, then $\overline{z} \in \overline{Z}$ is adjacent to $\overline{y}$ and $\overline{v}$. Thus $\overline{z}$ belongs to a shortest path between $\overline{y}$ and $\overline{v}$ which implies that $\overline{Z} \subseteq I^3[S]$.  Thus, $V(\overline{T}) \subseteq H(S)$.

Since $T$ is connected, every vertex $t \in V(T)\setminus \{u,x,y,v\}$ is reachable from a vertex in $\{u,x,y,v\}$ and since $V(\overline{T}) \subseteq I^3[S]$, then $t \in I^\alpha[S]$ for $4 \leq \alpha \leq h$, in which $h$ is the height of the breadth search tree of $T$, starting from $\{u,x,y,v\}$.
Since $H(S) = I^\alpha[S] = V(T\overline{T})$, then $S$ is a hull set of $T\overline{T}$, therefore $h(T\overline{T}) = 2$.

\end{proof}

Figure \ref{subfig:prisma_estrela} contains a complementary prism $S_n\overline{S}_n$ and Figure \ref{subfig:prisma_arvore} contains a complementary prism  $T\overline{T}$ with $T \neq S_n$.  The black vertices represent a hull set of each complementary prism.

\begin{figure}[ht]
\centering
{\setlength{\fboxsep}{9pt} % espacamento da margem em relacao a figura
\setlength{\fboxrule}{0.3pt} % tamanho da margem
\fbox{
\subfigure[Complementary prism of a star $S_n$.]
{
\psfrag{Sn}{$S_n$} \psfrag{Sbn}{$\overline{S}_n$}
\psfrag{u0}{$u_0$} \psfrag{ub0}{$\overline{u}_0$} 
\psfrag{u1}{$u_1$} \psfrag{ub1}{$\overline{u}_1$} 
\psfrag{u2}{$u_2$} \psfrag{ub2}{$\overline{u}_2$} 
\psfrag{un}{$u_n$} \psfrag{ubn}{$\overline{u}_n$} 
\psfrag{...}{...}
\includegraphics[width=0.32\textwidth]{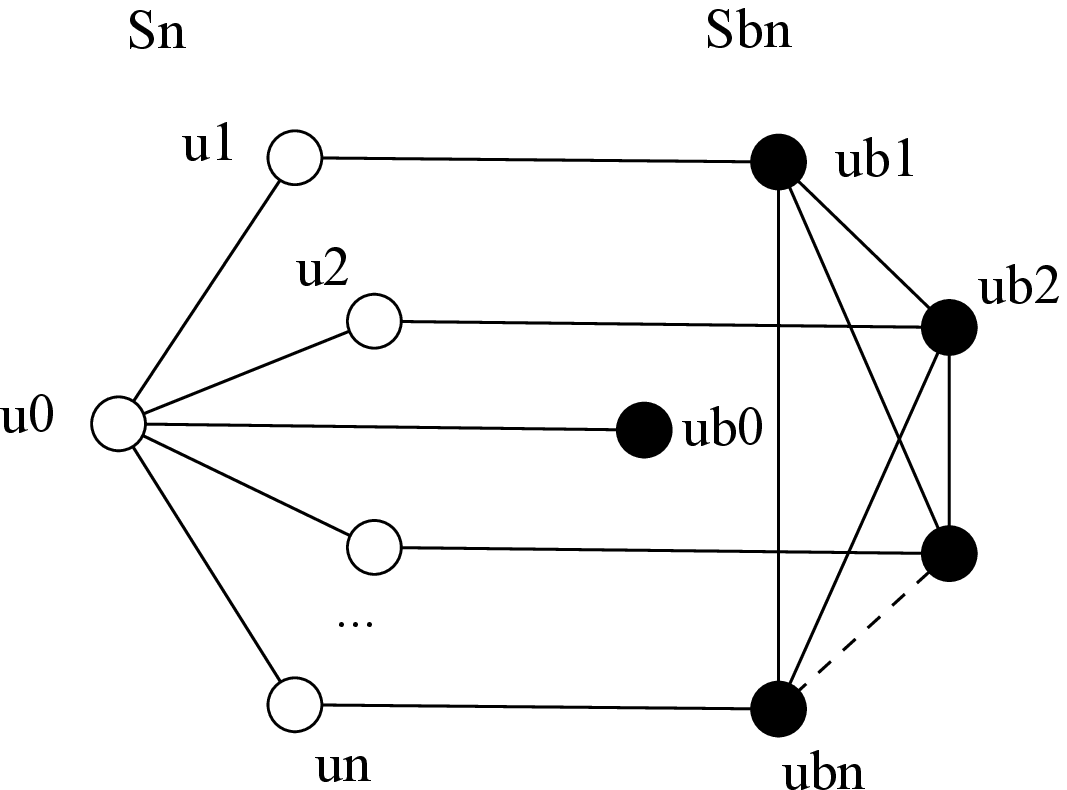}
\label{subfig:prisma_estrela}
} \qquad
\subfigure[Complementary prism of a tree $T$.]
{
\psfrag{T}{$T$} \psfrag{Tb}{$\overline{T}$}
\psfrag{u}{$u$} \psfrag{ub}{$\overline{u}$} 
\psfrag{x}{$x$} \psfrag{xb}{$\overline{x}$} 
\psfrag{y}{$y$} \psfrag{yb}{$\overline{y}$} 
\psfrag{v}{$v$} \psfrag{vb}{$\overline{v}$} 
\psfrag{z}{$z$} \psfrag{zb}{$\overline{z}$} 
\psfrag{w}{$w$} \psfrag{wb}{$\overline{w}$} 
\includegraphics[width=0.28\textwidth]{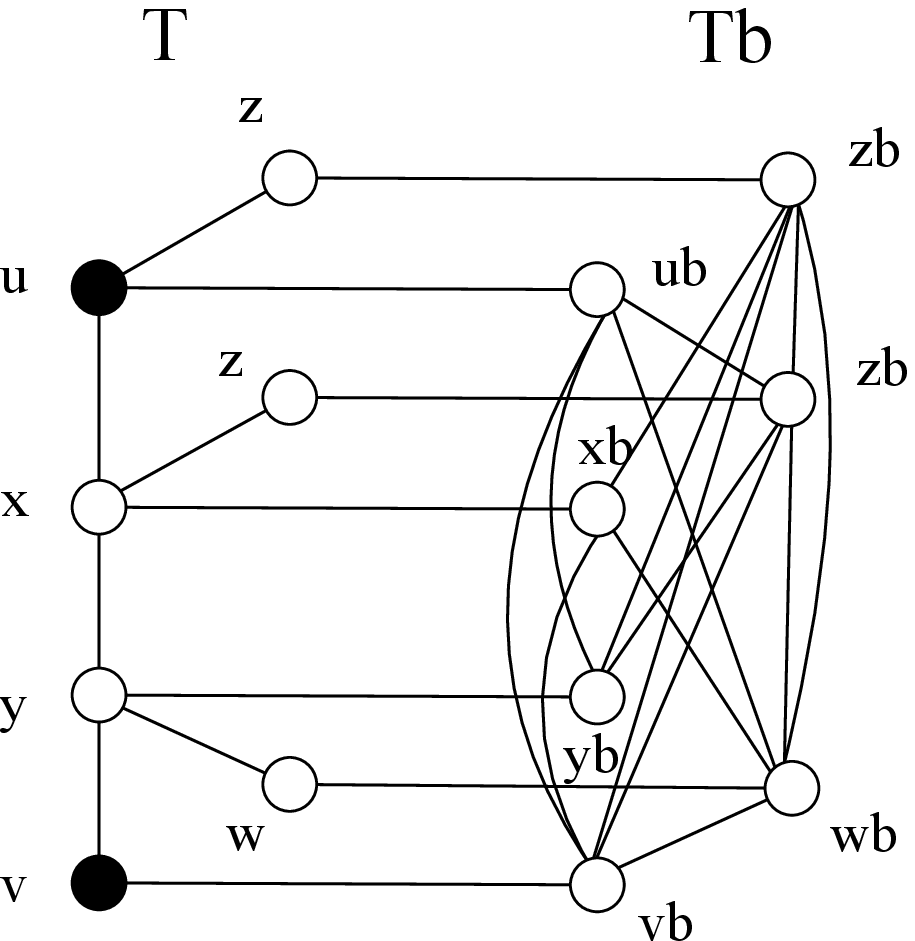}
\label{subfig:prisma_arvore}
}}
} 
\caption{Trees of the cases of Theorem \ref{theo:arvoresGeodesica}.}
\label{fig:prismas_arvores}
\end{figure}

\subsection{Disconnected Graphs}

In $P_3$-convexity, Duarte et al. \cite{duarte2015complexity} study the hull number in complementary prisms $G\overline{G}$ when $G$ is a disconnected graph. We show in Theorem \ref{theo:Gdesconexo} a similar result, considering the geodetic convexity.

\begin{theorem}
\label{theo:Gdesconexo}
Let $G$ be a disconnected graph. If $G$ has $k \geq 2$ connected components, in which at least two of them are nontrivial, then $h(G\overline{G}) = k + 1$.
\end{theorem}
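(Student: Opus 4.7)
The plan is to establish matching lower and upper bounds. Index the components so that $G_1,\ldots,G_s$ are the nontrivial ones (with $s\ge 2$ by hypothesis) and $G_{s+1},\ldots,G_k$ are the trivial ones; write $V_i=V(G_i)$ and let $T=V_{s+1}\cup\cdots\cup V_k$ collect the isolated vertices.

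For the lower bound $h(G\overline{G})\ge k+1$ the plan has three steps. First, each $t\in T$ is simplicial in $G\overline{G}$ because $N_{G\overline{G}}[t]=\{t,\overline{t}\}$ is a $K_2$, so Lemma \ref{lema:simpliciais} forces $T\subseteq S$ for every hull set $S$. Second, I would argue that $S$ must meet $V_i$ for each nontrivial $G_i$: a short case analysis shows that any shortest path in $G\overline{G}$ between two vertices of $V(G\overline{G})\setminus V_i$ stays outside $V_i$. The two ingredients are that $V(\overline{G})$ is convex in $G\overline{G}$ (using $k\ge 2$ to ensure every non-adjacent pair of $\overline{G}$ has a common neighbor of the form $\overline{x}$ with $x$ in another component) and that two vertices of $V(G)$ in different components of $G$ admit only the unique length-$3$ path $a-\overline{a}-\overline{b}-b$; an induction on $p$ then rules out $V_i$ entering $I^{p}[S]$ when $V_i\cap S=\emptyset$. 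This gives $|S|\ge k$. Third, to promote this to $k+1$, set $S_0=T\cup\{v_1,\ldots,v_s\}$ with one $v_i\in V_i$ per nontrivial component and compute $H(S_0)$ explicitly: the length-$3$ paths between vertices of $S_0$ lying in different components give $I[S_0]=S_0\cup\overline{S_0}$, while a second iteration adds nothing — pairs inside $\overline{S_0}$ are adjacent in $\overline{G}$ (their originals lie in distinct components) and mixed pairs only traverse vertices already present. Since $|S_0\cup\overline{S_0}|=2k<2n$ (using $n\ge k+s\ge k+2$), $S_0$ is not a hull set.

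For the upper bound, exhibit $S=T\cup\{v_1,v_1'\}\cup\{v_2,\ldots,v_s\}$, with $v_1v_1'\in E(G_1)$ (possible because $G_1$ is nontrivial) and arbitrary $v_j\in V_j$ for $j\ge 2$, so $|S|=k+1$. The same length-$3$ computation gives $I[S]=S\cup\overline{S}$. The decisive new pair is $(\overline{v_1},\overline{v_1'})$: because $v_1v_1'\in E(G)$ this pair is non-adjacent in $\overline{G}$, and its length-$2$ shortest paths pass through every $\overline{x}$ with $x$ in another component, placing $\bigcup_{j\neq 1}\overline{V_j}$ inside $I^{2}[S]$. Picking an edge $yz$ of some $G_j$ with $j\neq 1$ (available thanks to $s\ge 2$), the pair $(\overline{y},\overline{z})$ is non-adjacent in $\overline{G}$ and has every $\overline{x}$ with $x\in V_1$ as a common neighbor, so $\overline{V_1}\subseteq I^{3}[S]$ and $V(\overline{G})\subseteq H(S)$. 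To reach the rest of $V(G)$, iterate along $G$: whenever $y\in H(S)\cap V(G)$ and $yx\in E(G)$, the pair $(y,\overline{x})$ has unique length-$2$ shortest path $y-x-\overline{x}$, which adds $x$. Because every nontrivial $V_j$ meets $S\cap V(G)$ and is connected in $G$, this sweep exhausts $V(G)$.

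The main obstacle is the $+1$ in the lower bound: the simplicial-vertex argument only gives $k$, and the extra requires explicitly showing that the natural candidate $S_0$ gets trapped at $S_0\cup\overline{S_0}$. Once this trap is understood, the upper-bound construction is guided by it — the extra vertex $v_1'$ is placed adjacent to $v_1$ in $G_1$ so that $(\overline{v_1},\overline{v_1'})$ becomes a non-adjacent pair in $\overline{G}$, and its interval escapes to sweep the rest of the prism.
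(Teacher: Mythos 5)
Your proposal is correct and follows essentially the same strategy as the paper: the lower bound comes from each component of $G$ requiring a vertex of $S$ plus the observation that one vertex per component only reaches the matched partners, and the upper bound uses a $(k+1)$-set built around an edge of a nontrivial component, contaminating $V(\overline{G})$ in two stages via non-adjacent pairs in $\overline{G}$ and then sweeping $V(G)$ by connectivity. The only cosmetic difference is that you place both endpoints $v_1,v_1'$ of the chosen edge in $S$, whereas the paper uses $\{u_1,\overline{u}_2\}$; your more explicit justification of the ``$+1$'' step (classifying all size-$k$ candidates and showing they are trapped at $S_0\cup\overline{S_0}$) is a welcome tightening of the paper's terser argument.
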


\begin{proof}
Let $G$ be a disconnected graph with $V(G) = \{u_1,\dots,u_n\}$. Suppose that $G$ has at least two nontrivial connected components. Let $G_1, \dots, G_k$ be the connected components of $G$.  Let $\overline{G}_1, \dots, \overline{G}_k$ the subgraphs induced by $\overline{u}_i \in V(\overline{G})$ corresponding to the vertices  $u_i$ from each connected component of $G$.

Since $G$ is disconnected, every component $G_{i}$, $1\leq i \leq k$, is not contaminated by vertices from other components $G_j$, $1\leq j \leq k$, $i \neq j$. Since every vertex of $\overline{G}_{i}$, $1\leq i \leq k$, is adjacent to every vertex of $\overline{G}_{j}$, $1\leq j \leq k$, for $i \neq j$, then the distance between any two vertices of $\overline{G}$ is at most $2$. Thus, every component $G_{i}$, $1\leq i \leq k$, is not contaminated only by vertices of $\overline{G}$, which implies that $h(G\overline{G}) \geq k$. Since every set of vertices of $G\overline{G}$ containing only one vertex $u_i$ of each component $G_{i}$, $1\leq i \leq k$, contaminates only corresponding vertices $\overline{u}_i$ in  $\overline{G}_i$, then $h(G\overline{G}) \geq k+1$. 

Let $G_r$ and $G_{s}$ two nontrivial connected components of $G$. Let $u_1,u_2 \in V(G_r)$ such that $u_1u_2 \in E(G)$. Let $S = \{u_1,\overline{u}_2\} \cup \{u_i \in G_i : 1 \leq i \leq k, i \neq r\}$. 

Since $d_{G\overline{G}}(u_1,\overline{u}_2) = 2$, $u_1u_2\overline{u}_2$ is a shortest path between $u_1$ and $\overline{u}_2$, then $u_2 \in I[S]$. We also have that $u_1\overline{u}_1\overline{u}u$ is a shortest path between $u_1$ and $u$, for every $u \in S$ from each $G_i$, $1 \leq i \leq k$ with $i \neq r$, then $\overline{u}_1,\overline{u} \in I[S]$.

Since $u_1u_2 \in E(G)$, then $\overline{u}_1\overline{u}_2 \notin E(\overline{G})$. 
Since $G$ has connected components $G_1, \dots, G_k$, then every vertex of each induced subgraph  $\overline{G}_i$ of $\overline{G}$ is adjacent to every vertex of each induced subgraph $\overline{G}_{j}$ of $\overline{G}$, for $1 \leq i,j \leq k$, $i \neq j$.
Since  $d_{G\overline{G}}(\overline{u}_1,\overline{u}_2) = 2$, every vertex $\overline{x}$ in $\overline{G}_j$, $1 \leq j \leq k$ with $j \neq r$, is in a shortest path between  $\overline{u}_1$ and $\overline{u}_2$, thus $I^2[S]$ contains every vertex of $\overline{G}_j$, $1 \leq j \leq k$, $j \neq r$.

Since $G_{s}$ is a nontrivial connected component of $G$, then there exist at least two adjacent vertices in $G_s$, $u_3$ and $u_4$. 
Since $\overline{u}_3,\overline{u}_4 \in I^2[S]$, because $r \neq s$, and $d_{G\overline{G}}(\overline{u}_3,\overline{u}_4) = 2$, then every vertex $\overline{w} \in \overline{G}_r$ is in a shortest path linking $\overline{u}_3$ and $\overline{u}_4$, which implies that $I^3[S]$ contains every vertex $\overline{w} \in \overline{G}_r $. Thus, $I^3[S]$ contains $V(\overline{G})$.

Remains to show that $V(G) \subseteq H(S)$. Consider now, each connected component $G_i$, for $1 \leq i \leq k$. Since each component $G_i$ is connected, then there exists a path joining each pair of vertices in each $G_i$. Since there exists at least one vertex $u$ contaminated in each $G_i$, then every vertex $u_l \in G_i$ is in a shortest path between $\overline{u}_l$ and $u$, or between $\overline{u}_l$ and its predecessor previously contaminated, which implies that $u_l \in H(S)$. This way, $V(G) \subseteq H(S)$, then $V(G\overline{G}) = H(S)$. Therefore, $S$ is a hull set of $G\overline{G}$ and $h(G\overline{G}) = k + 1$.
\end{proof}

Ilustrating Theorem \ref{theo:Gdesconexo}, Figure \ref{fig:Gdesconexo} shows a complementary prism $G\overline{G}$ of a disconnected graph $G$ with at least two nontrivial connected components. Circles represent the connected components $G_i$, dashed circles represent the subgraphs  $\overline{G}_i$ and dashed lines represent the set of edges joining every vertex of $\overline{G}_i$ to every vertex of $\overline{G}_j$, $i \neq j$. Black vertices represent a hull set of  $G\overline{G}$.

\begin{figure}[htb]
\centering
\captionsetup{justification=centering}
{\setlength{\fboxsep}{9pt} % espacamento da margem em relacao a figura
\setlength{\fboxrule}{0.3pt} % tamanho da margem
\fbox{
\psfrag{G}{$G$} \psfrag{Gb}{$\overline{G}$}
\psfrag{up}{$u_1$} \psfrag{ubp}{$\overline{u}_1$}
\psfrag{uq}{$u_2$} \psfrag{ubq}{$\overline{u}_2$}
\psfrag{v}{$u_4$} \psfrag{x}{$\overline{u}_4$}
\psfrag{Gi}{$G_r$} \psfrag{Gbi}{$\overline{G}_r$}
\psfrag{Gi'}{$G_{s}$} \psfrag{Gbi'}{$\overline{G}_{s}$}
\psfrag{vi'}{$u_{3}$} \psfrag{vbi'}{$\overline{u}_{3}$} 
\psfrag{Gj}{$G_j$} \psfrag{Gbj}{$\overline{G}_j$}
\psfrag{vj}{$u$} \psfrag{vbj}{$\overline{u}$} 
\psfrag{w}{$\overline{w}$}
\includegraphics[width=0.38\textwidth]{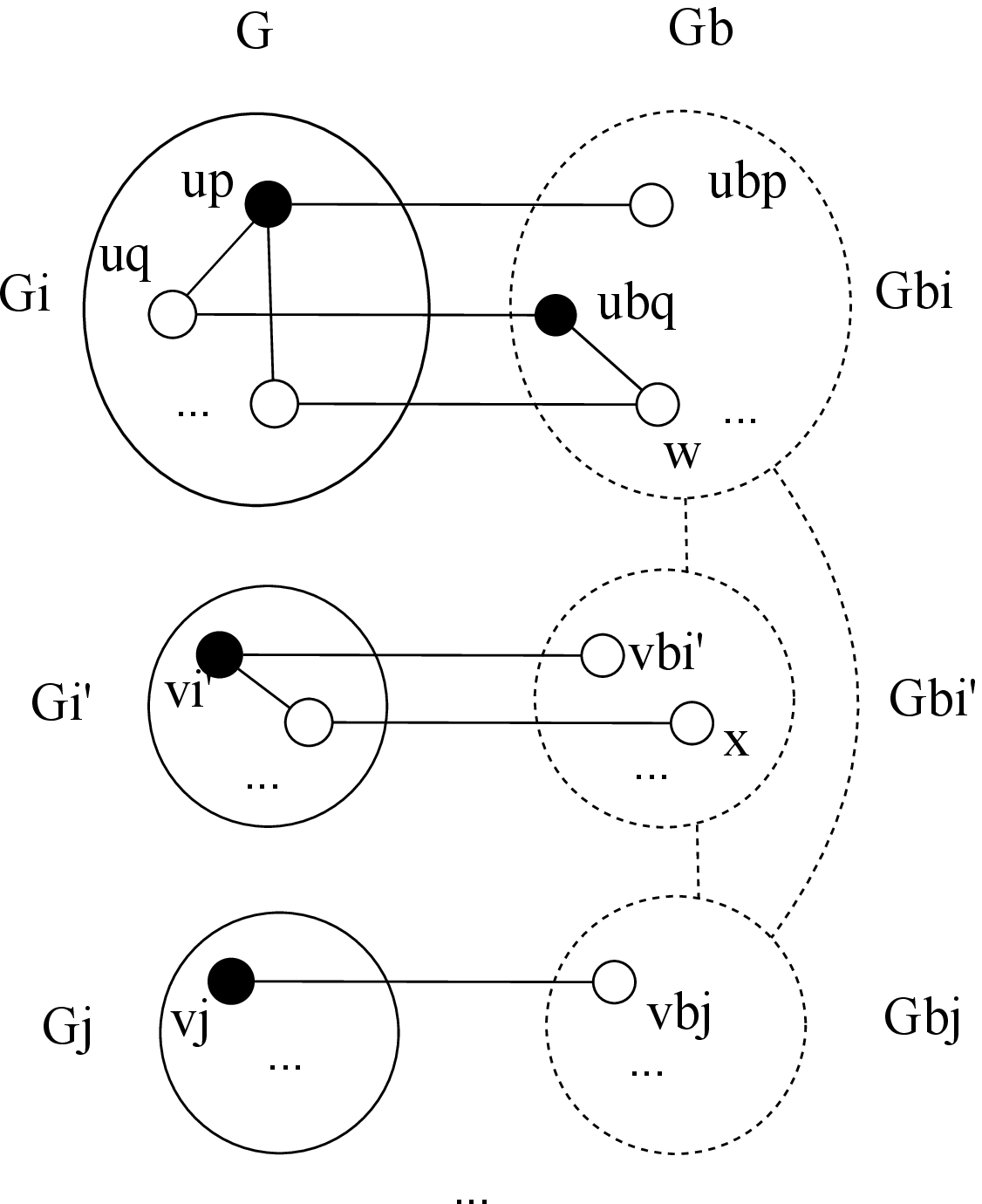}
}}
\caption{Complementary prism $G\overline{G}$ of a disconnected graph $G$, \\with nontrivial connected components $G_r$ and $G_s$.}
\label{fig:Gdesconexo}
\end{figure}

Now, let $G$ be a disconnected graph with exactly one nontrivial connected component. We show lower and upper bounds on the hull number of the complementary prism $G\overline{G}$, in Theorems \ref{theo:limiteInferiorGdesconexo} and \ref{theo:limiteSuperiorGdesconexo}, respectively. 

\begin{theorem}\label{theo:limiteInferiorGdesconexo}
Let $G$ be a disconnected graph. If $G$ has exactly one nontrivial connected component and $t > 0$ trivial components, then $h(G\overline{G}) \geq t + 2$.
\end{theorem}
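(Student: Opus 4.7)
The plan is to use Lemma~\ref{lema:simpliciais} to pin down $t$ mandatory vertices and then show, by case analysis on the one remaining vertex, that no hull set of size $t+1$ exists. Let the trivial components of $G$ be the isolated vertices $u_2,\ldots,u_{t+1}$ and let $G_1$ be the unique nontrivial component, so $|V(G_1)|\geq 2$. For each $2\leq i\leq t+1$, the only neighbor of $u_i$ in $G\overline{G}$ is $\overline{u}_i$ via the matching edge, so $N_{G\overline{G}}[u_i]=\{u_i,\overline{u}_i\}$ is a clique and $u_i$ is simplicial in $G\overline{G}$. By Lemma~\ref{lema:simpliciais}, every hull set contains $S_0:=\{u_2,\ldots,u_{t+1}\}$.

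Next I would establish three uniqueness-of-shortest-path statements in $G\overline{G}$: (a) for distinct $i,j\geq 2$, the unique shortest $(u_i,u_j)$-path is $u_i,\overline{u}_i,\overline{u}_j,u_j$ (length $3$); (b) for $i\geq 2$ and $v\in V(G_1)$, the unique shortest $(u_i,v)$-path is $u_i,\overline{u}_i,\overline{v},v$ (length $3$); (c) for $i\geq 2$ and $v\in V(G_1)$, the unique shortest $(u_i,\overline{v})$-path is $u_i,\overline{u}_i,\overline{v}$ (length $2$). Each uniqueness uses that $u_i$ has $\overline{u}_i$ as its sole neighbor and that from $\overline{u}_i$ the only way into $V(G)\setminus\{u_i\}$ in one more step is through the matching edge of the endpoint, since $u_i$ is not adjacent to anything in $G$.

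Suppose for contradiction that some hull set $S$ satisfies $|S|=t+1$; by the simplicial argument $S=S_0\cup\{x\}$ for one extra vertex $x$. I split on $x$. If $x=\overline{u}_l$ for some isolated index $l\geq 2$, then by~(a) the vertex $x$ already lies in $I[S_0]$ (or $S$ merely spans a matching edge when $t=1$), and iterating intervals stays inside $\{u_2,\ldots,u_{t+1},\overline{u}_2,\ldots,\overline{u}_{t+1}\}$, which misses all of $V(G_1)\cup V(\overline{G}_1)$. If $x=u_j$ with $u_j\in V(G_1)$, then (a)--(c) yield $I[S]=S\cup\{\overline{u}_2,\ldots,\overline{u}_{t+1},\overline{u}_j\}$; verifying that every further pair involving $u_j$ or $\overline{u}_j$ has a unique shortest path already contained in this set shows $I^2[S]=I[S]$, and hence $H(S)$ omits every vertex of $V(G_1)\setminus\{u_j\}$, which is nonempty because $|V(G_1)|\geq 2$. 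If $x=\overline{u}_j$ with $u_j\in V(G_1)$, a parallel computation gives $H(S)=S_0\cup\{\overline{u}_2,\ldots,\overline{u}_{t+1},\overline{u}_j\}$, which misses $V(G_1)$ entirely. In every case $H(S)\neq V(G\overline{G})$, a contradiction, so $h(G\overline{G})\geq t+2$.

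The delicate step is stability of the hull in the last two cases: one must check that, once the initial intervals are taken, every pair of already-contaminated vertices either induces an edge, or has a unique shortest path that stays inside the current set, or uses a matching edge that does not branch into $V(G_1)$. Uniqueness in~(a)--(c), which crucially uses that each $u_i$ is isolated in $G$, is what forces propagation to die before reaching the nontrivial component.
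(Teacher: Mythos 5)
Your proof is correct and follows essentially the same route as the paper: Lemma~\ref{lema:simpliciais} forces the $t$ isolated vertices into every hull set, and the rest of the argument shows that a single additional vertex only contaminates its matched partner (because each isolated vertex's shortest paths are forced through the matching edge into $\overline{G}$), so no set of size $t+1$ can reach all of $G_1$. Your explicit case analysis on the extra vertex $x$, together with the uniqueness claims (a)--(c), is in fact a more detailed justification of the two steps the paper states rather tersely ($h(G\overline{G})\geq t+1$ and $h(G\overline{G})\geq t+2$).
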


\begin{proof}
Let $G$ be a disconnected graph, with exactly one nontrivial connected component, denoted by $G_1$. Consider $v_1, \dots, v_t$ the vertices of the $t > 0$ trivial components of $G$ and $V(G_1) = \{u_1, \dots, u_p\}$. Let $S \subseteq V(G\overline{G})$ a hull set of $G\overline{G}$.

Since $t > 0$ and every vertex $v_i$ of each trivial component of $G$, for $1 \leq i \leq t$, is simplicial in $G\overline{G}$, by Lemma \ref{lema:simpliciais}, $v_i \in S$. This implies that $h(G\overline{G}) \geq t$. Since the distance between any two vertices of $\overline{G}$ is at most $2$, we have that no shortest path between two vertices of  $\overline{G}$ contains vertices of $G_1$. Then, $S \cap V(G_1) \neq \emptyset$, which implies that $h(G\overline{G}) \geq t + 1$.
 Since $G$ has a nontrivial component $G_1$, then $|V(G_1)| \geq 2$. But $t + 1$ vertices of $G$, with only one them in $G_1$ contaminates only the corresponding vertices in $G\overline{G}$. Then $|S \cap V(G_1\overline{G}_1)| \geq 2$, and $h(G\overline{G}) \geq t+2$.
\end{proof}

\begin{theorem}\label{theo:limiteSuperiorGdesconexo}
Let $G$ be a disconnected graph with exactly one nontrivial connected component. Consider $G_1$ the nontrivial component of $G$, $\overline{G}_1$ the corresponding component of $G_1$ in $\overline{G}$ and $t > 0$ the number of trivial components in $G$. Then: 
\begin{numcases}{h(G\overline{G}) \leq }
h(G_1) + t, & \mbox{if $diam(G_1) \leq 3$}; \label{theo:limiteSuperiorGdesconexoMenorIgual3}\\
t+2, & \mbox{otherwise}. \label{theo:limiteSuperiorGdesconexoCasoContrario}
\end{numcases}
\end{theorem}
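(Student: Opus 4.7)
The proof splits along the two cases. For Case~1, $diam(G_1)\le 3$, my plan is to take $S_1$ a minimum hull set of $G_1$ and set $S=S_1\cup\{v_1,\dots,v_t\}$, which has size $h(G_1)+t$. First I would verify that for any $a,b\in V(G_1)$ the equality $d_{G\overline{G}}(a,b)=d_{G_1}(a,b)$ holds: the only non-$G_1$ path between $a$ and $b$ is $a,\overline{a},\overline{b},b$ of length $3$, which cannot improve on $d_{G_1}(a,b)\le 3$. Hence shortest paths in $G_1$ remain shortest in $G\overline{G}$ between vertices of $V(G_1)$, so iterating the interval operation from $S_1$ inside $V(G_1)$ already yields $V(G_1)\subseteq H(S)$. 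Then, for each now-contaminated $a\in V(G_1)$ and any $v_i\in S$, the unique shortest $a$-$v_i$ path $a,\overline{a},\overline{v}_i,v_i$ (of length $3$) places both $\overline{a}$ and $\overline{v}_i$ in $H(S)$. Thus $V(\overline{G})\subseteq H(S)$ and $h(G\overline{G})\le h(G_1)+t$.

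For Case~2, $diam(G_1)\ge 4$, I would exhibit a hull set of size $t+2$, matching the lower bound of Theorem~\ref{theo:limiteInferiorGdesconexo}. Concretely, I would pick $u,v\in V(G_1)$ at distance $3$ in $G_1$ along a geodesic of length at least $4$, let $u,x,y,v$ be a shortest $u$-$v$ path, and take $S=\{u,v,v_1,\dots,v_t\}$. The argument will parallel that of Theorem~\ref{theo:arvoresGeodesica}. In $I[S]$, the pairs $(u,v_i),(v,v_j),(v_i,v_j)$ each have their unique length-$3$ shortest path through the matching, giving $\overline{u},\overline{v}$ and all $\overline{v}_i$; meanwhile $I[u,v]$ contributes $x,y$ (internal vertices of the $G_1$-shortest path) together with $\overline{u},\overline{v}$, because shortest $u$-$v$ paths in $G\overline{G}$ have length $3$ and either lie in $G_1$ or use the matching detour $u,\overline{u},\overline{v},v$. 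In $I^2[S]$, for every $a\in V(G_1)\cap I[S]$ the length-$2$ interval $I[a,\overline{v}_i]$ passes only through $\overline{a}$, so $\overline{x},\overline{y}\in I^2[S]$. Subsequent rounds use the non-edges $\overline{u}\overline{x},\overline{x}\overline{y},\overline{y}\overline{v}$ of $\overline{G}$ (arising from the edges $ux,xy,yv$ of $G_1$): each such pair has distance $2$ in $G\overline{G}$ with middles $\{\overline{w}:w\text{ is a common non-neighbor of the pair in }G\}$, propagating contamination through $V(\overline{G}_1)$. Once $V(\overline{G}_1)\subseteq H(S)$, a breadth-first search from $u$ through $G_1$ completes $V(G_1)$: for each $z\in V(G_1)$ with a contaminated neighbor $z'$, the unique shortest $\overline{z}$-$z'$ path of length $2$ is $\overline{z},z,z'$, so $z\in I[\overline{z},z']$.

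The main obstacle I anticipate is in Case~2: showing that every $\overline{w}$ with $w\in V(G_1)$ is eventually contaminated. Any vertex $w$ adjacent in $G_1$ to two non-adjacent members of $\{u,x,y,v\}$ --- for instance, $w$ adjacent to both $x$ and $y$ but to neither $u$ nor $v$ --- is not captured by the three pair-intervals $I[\overline{u},\overline{x}],I[\overline{x},\overline{y}],I[\overline{y},\overline{v}]$, since the common-non-neighbor condition fails for each. To handle such configurations I would first try to contaminate $w$ itself in $V(G_1)$ via a length-$2$ interval whose middles are common $G_1$-neighbors, such as $I[u,y]$ or $I[x,v]$, and then pull in $\overline{w}$ via the length-$3$ interval $I[w,v_i]$. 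When no such direct pair reaches $w$, the hypothesis $diam(G_1)\ge 4$ supplies a vertex $z$ far from both $u$ and $v$ whose corresponding $\overline{z}$, once contaminated, opens up further pair-intervals $I[\overline{z},\overline{a}]$ (with $za\in E(G_1)$) that can catch the remaining cases. A finite case analysis on $N_{G_1}(w)\cap\{u,x,y,v\}$ should then exhaust every configuration, yielding $H(S)=V(G\overline{G})$ and $h(G\overline{G})\le t+2$.
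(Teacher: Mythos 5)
Your proposal follows essentially the same route as the paper in both cases: for $diam(G_1)\le 3$ the set $S_1\cup\{v_1,\dots,v_t\}$ together with the length-$3$ matching detours, and for $diam(G_1)\ge 4$ the set $\{u,v\}\cup\{v_1,\dots,v_t\}$ where $u,x,y,v$ are the first four vertices of a geodesic of length $4$, followed by a case analysis on $N_{G_1}(w)\cap\{u,x,y,v\}$. Case~1 is complete as written, and in Case~2 all configurations except one are settled by the intervals you list.

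The one place the sketch stops short is exactly the configuration you flag, $N_{G_1}(w)\cap\{u,x,y,v\}=\{x,y\}$. Your first fallback fails there: $w\notin I[u,y]$ and $w\notin I[x,v]$ because $w$ is adjacent to neither $u$ nor $v$. Your second fallback is the right idea but is described inaccurately: the vertex supplied by $diam(G_1)\ge 4$ is the fifth vertex $u_5$ of the geodesic $u,x,y,v,u_5$, which is far from $u$ but \emph{adjacent} to $v$, not ``far from both $u$ and $v$.'' Concretely, $u_5$ has no neighbor among $\{u,x,y\}$ (otherwise $d_{G_1}(u,u_5)<4$), so $\overline{u}_5$ is caught by your $I[\overline{u},\overline{x}]$-type step; and a $w$ adjacent to $x$ with $d_{G_1}(u,u_5)=4$ cannot be adjacent to $u_5$, so $\overline{w}$ is adjacent to both $\overline{v}$ and $\overline{u}_5$, which are at distance $2$ in $G\overline{G}$ since $vu_5\in E(G_1)$; hence $\overline{w}\in I[\overline{v},\overline{u}_5]$. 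This is precisely the paper's resolution of that sub-case (its $I[\overline{u}_4,\overline{u}_5]$), and with it your finite case analysis does exhaust every configuration.
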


\begin{proof}
Let $G$ be a disconnected graph with exactly one nontrivial connected component, denoted by $G_1$. Consider $v_1, \dots, v_t$ the vertices of the $t > 0$ trivial components of $G$ and $V(G_1) = \{u_1, \dots, u_p\}$.

\eqref{theo:limiteSuperiorGdesconexoMenorIgual3} Let $S_1$ be a minimum hull set of $G_1$ and suppose that $diam(G_1) \leq 3$. Let $S \subseteq V(G\overline{G})$ such that $S = S_1 \cup \{v_1, \dots, v_t\}$.

Since $S_1$ is a hull set of $G_1$ and the maximum distance between each pair of vertices of $G_1$ is at most $3$, we have that the distances between each pair of vertices of $G_1$ are the same in $G\overline{G}$, then $H(S)$ contains $V(G_1)$. Consequently, $V(G) \subseteq H(S)$, since $v_i \in S$, for every $1 \leq i \leq t$.

Since $G$ is disconnected, then $d_{G\overline{G}}(u,v_i) = 3$, for every $u \in V(G_1)$, for every $1 \leq i \leq t$. Thus, we have that $\overline{u}, \overline{v}_i \in I[u,v_i]$, for every $u \in V(G_1)$ and $1 \leq i \leq t$, which implies that $V(\overline{G}) \subseteq H(S)$. Thus, $H(S) = V(G\overline{G})$, and $S$ is a hull set of $G\overline{G}$. Therefore $h(G\overline{G}) \leq h(G_1) + t$. 

\eqref{theo:limiteSuperiorGdesconexoCasoContrario} Let $S_1$ be a minimum hull set of $G_1$ and suppose that $diam(G_1) > 3$. 

Since $diam(G_1) > 3$, there exist at least two vertices $x, y \in V(G_1)$ such that $d_G(x,y) > 3$. Thus, we can define, without loss of generality, a path $P = u_1u_2u_3u_4u_5$ in $G$ such that $d_G(u_1,u_5) = 4$.

Let $S \subseteq V(G\overline{G})$ such that $S = \{u_1,u_4\} \cup \{v_i\}$, for every $1 \leq i \leq t$.

Since $d_{G\overline{G}}(u_1,u_4) = 3$, then $u_2,u_3 \in I[S]$. Still in the same iteration of the interval operation, $\overline{u}_1,\overline{u}_4, \overline{v}_i \in I[S]$, for every $1 \leq i \leq t$, since $u_1\overline{u}_1\overline{v}_iv_i$ is a shortest path from $u_1$ to $v_i$, for every $1 \leq i \leq t$. Then, since $u_2$ and $u_3$ are contaminated, we have that $\overline{u}_2, \overline{u}_3 \in I^2[S]$, since $u_2\overline{u}_2\overline{v}_iv_i$ and $u_3\overline{u}_3\overline{v}_iv_i$ are shortest paths, respectively, joining $u_2$ and $v_i$, and $u_3$ and $v_i$, for some $1 \leq i \leq t$.

To complete the proof that $S$ is a hull set of $G\overline{G}$, we analyse a vertex $z \in V(G)$ depending on the number of its neighbors in $P' = \{u_1,u_2,u_3,u_4\}$, to verify whether  $z$ or $\overline{z}$ can be contaminated.  Since $d_{G\overline{G}}(u_1,u_4) = 3$, we disregard the cases in which $z$ is adjacent to both $u_1$ and $u_4$.

\bigskip
\noindent {\bf Case 1:} {\it $|N_G(z)\cap P'| \leq 1$.}
\bigskip

\noindent  In this case, $\overline{z}$ is adjacent to two vertices of $\{\overline{u}_1,\overline{u}_2, \overline{u}_3, \overline{u}_4\}$, say $\overline{u}_i$, and $\overline{u}_j$. Then $\overline{z} \in I[\overline{u}_i, \overline{u}_j]$.

\bigskip
\noindent {\bf Case 2:} {\it $|N_G(z)\cap P'| = 2$.}
\bigskip

\noindent Suppose that $z$ has two neighbors $u_i,u_j \in P'$, for $i,j \in \{1,\dots,4\}$, $i \neq j$. If $i,j \in \{1,2\}$ ($i,j \in \{3,4\}$), $i \neq j$, we repeat the arguments in Case 1.
If $i,j \in \{2,3\}$, $i \neq j$, we have that $\overline{z} \in I[\overline{u}_4, \overline{u}_5]$.
If $i,j \in \{1,3\}$, ($i,j \in \{2,4\}$) $i \neq j$, we have that $z \in I[u_1,u_4]$. Since $z$ is contaminated, then $\overline{z} \in I[z,v]$, for $v$ of any trivial component of $G$.

\bigskip
\noindent {\bf Case 3:} {\it $|N_G(z)\cap P'| = 3$.}
\bigskip

\noindent Suppose that $z$ has three neighbors $u_i,u_j,u_k \in P'$, for $i,j,k \in \{1,\dots,4\}$, $i \neq j \neq k$. Since $d_{G\overline{G}}(u_1,u_4) = 3$, $i,j,k \in \{1,2,3\}$ or $i,j,k \in \{2,3,4\}$, $i \neq j \neq k$. Then we have that $z \in I[u_1,u_4]$. Since $z$ is contaminated, then, as in Case 2,  $\overline{z}$ is contaminated.

\bigskip

\noindent By the cases above, we conclude that $\overline{z} \in H(S)$. Thus, we have that $V(\overline{G}) \subseteq H(S)$. Since $G_1$ is connected and $S \cap V(G_1) \neq \emptyset$, every uncontaminated vertex $z \in V(G)$ is contaminated by the vertices from a shortest path between $\overline{z}$ and its contaminated predecessor in $G_1$, then $H(S) = V(G\overline{G})$. Therefore, $S$ is a hull set of $G\overline{G}$ and $h(G\overline{G}) \leq t+2$, which completes the proof.

\end{proof}

Figure \ref{fig:prismas_e_diametros} shows examples of complementary prisms in which their minimum hull sets illustrate the upper bounds of Theorem \ref{theo:limiteSuperiorGdesconexo}. Black vertices represent a hull set of each complementary prism. The dashed circles represent the subgraph  $\overline{G}_1$ and dashed lines represent the set of edges joining every vertex of $\overline{G}_1$ to $v_1$ and $v_2$ in each complementary prism. 

\begin{figure}[htb]
\centering
{\setlength{\fboxsep}{12pt} % espacamento da margem em relacao a figura
\setlength{\fboxrule}{0.3pt} % tamanho da margem
\fbox{
\subfigure[Complementary prism $G\overline{G}$ in which $diam(G_1) = 3$.]{
\psfrag{G}{$G$} \psfrag{Gb}{$\overline{G}$}
\psfrag{v1}{$v_1$} \psfrag{vb1}{$\overline{v}_1$}
\psfrag{v2}{$v_2$} \psfrag{vb2}{$\overline{v}_2$}
\psfrag{G1}{$G_1$} \psfrag{Gb1}{$\overline{G}_1$}
\includegraphics[width=0.3\textwidth]{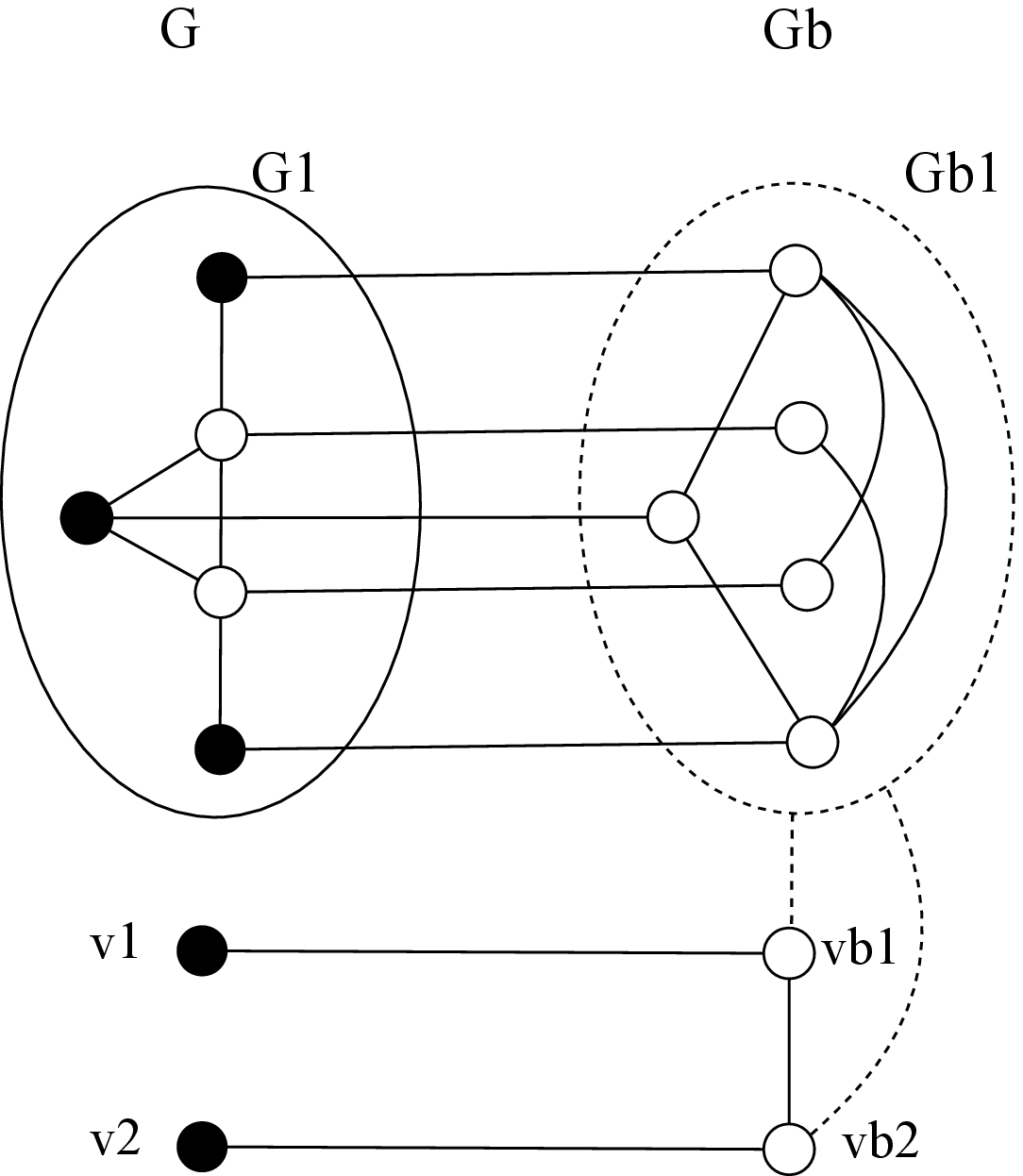}
\label{subfig:prisma_desconexo_uma_componente_nao_trivial}
} \qquad
\subfigure[Complementary prism $G\overline{G}$ in which $diam(G_1) = 4$.]{
\psfrag{G}{$G$} \psfrag{Gb}{$\overline{G}$}
\psfrag{G1}{$G_1$} \psfrag{Gb1}{$\overline{G}_1$}
\psfrag{u1}{$u_1$} \psfrag{ub1}{$\overline{u}_1$}
\psfrag{u2}{$u_2$} \psfrag{ub2}{$\overline{u}_2$}
\psfrag{u3}{$u_3$} \psfrag{ub3}{$\overline{u}_3$}
\psfrag{u4}{$u_4$} \psfrag{ub4}{$\overline{u}_4$}
\psfrag{u5}{$u_5$} \psfrag{ub5}{$\overline{u}_5$}
\psfrag{v1}{$v_1$} \psfrag{vb1}{$\overline{v}_1$}
\psfrag{v2}{$v_2$} \psfrag{vb2}{$\overline{v}_2$}
\includegraphics[width=0.40\textwidth]{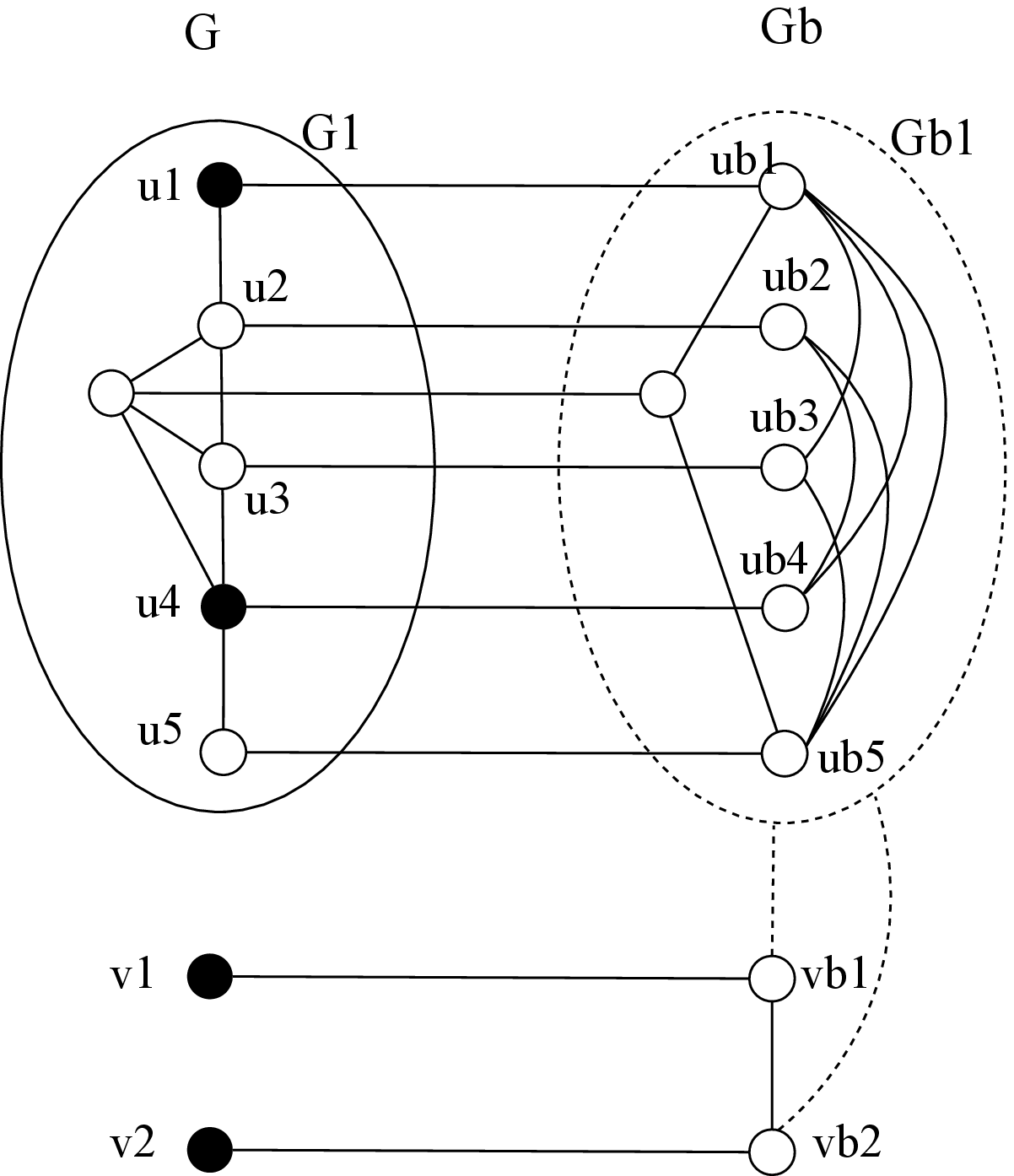}
\label{subfig:prisma_desconexo_diam_maior_que_tres}
}
}}
\caption{Complementary prisms $G\overline{G}$ of disconnected graphs $G$ with one nontrivial component.}
\label{fig:prismas_e_diametros}
\end{figure}

Theorem \ref{theo:limiteInferiorGdesconexo} and Theorem \ref{theo:limiteSuperiorGdesconexo} \eqref{theo:limiteSuperiorGdesconexoCasoContrario}, implies directly in the equality of Corollary  \ref{cor:t+2}.

\begin{corollary} \label{cor:t+2}
Let $G$ be a disconnected graph with one nontrivial connected component $G_1$ and $t > 0$ trivial components. If $diam(G_1) > 3$, then $h(G\overline{G}) = t + 2$.
\end{corollary}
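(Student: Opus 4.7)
The plan is to observe that this corollary is an immediate sandwich of the bounds already established in Theorems \ref{theo:limiteInferiorGdesconexo} and \ref{theo:limiteSuperiorGdesconexo}. Since $G$ is a disconnected graph with exactly one nontrivial connected component $G_1$ and $t > 0$ trivial components, the hypotheses of both theorems are satisfied verbatim, so I can apply each of them without any further preparation.

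First, I would invoke Theorem \ref{theo:limiteInferiorGdesconexo} to obtain the lower bound $h(G\overline{G}) \geq t + 2$. This part uses no information about the diameter of $G_1$, so the condition $diam(G_1) > 3$ plays no role here. Next, because $diam(G_1) > 3$, I am in the second case of Theorem \ref{theo:limiteSuperiorGdesconexo}, namely inequality \eqref{theo:limiteSuperiorGdesconexoCasoContrario}, which provides the matching upper bound $h(G\overline{G}) \leq t + 2$. Combining the two inequalities yields $h(G\overline{G}) = t + 2$, which is the desired equality.

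There is essentially no obstacle here beyond checking that the hypotheses line up: the lower bound requires only that there be at least one trivial component and exactly one nontrivial component, while the upper bound \eqref{theo:limiteSuperiorGdesconexoCasoContrario} additionally requires $diam(G_1) > 3$, exactly the assumption of the corollary. Hence the proof is a one-line assembly of prior results, and I would present it as such rather than reproducing any of the casework from Theorem \ref{theo:limiteSuperiorGdesconexo}.
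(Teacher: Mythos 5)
Your proposal is correct and matches the paper exactly: the authors state that Corollary \ref{cor:t+2} follows directly by combining the lower bound of Theorem \ref{theo:limiteInferiorGdesconexo} with the upper bound \eqref{theo:limiteSuperiorGdesconexoCasoContrario} of Theorem \ref{theo:limiteSuperiorGdesconexo}. Your check that the hypotheses of both results are satisfied is the only content needed, so the one-line assembly is the intended proof.
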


Continuing our studies on complementary prisms $G\overline{G}$ when $G$ is a disconnected graph with one nontrivial connected component $G_1$, we notice that the upper bound of $h(G\overline{G})$ can also be determined in function of $h(\overline{G}_1)$. Theorem \ref{theo:limiteSuperiordiam2} shows this result.

\begin{theorem} \label{theo:limiteSuperiordiam2}
Let $G$ be a disconnected graph with exactly one nontrivial connected component. Consider $G_1$ the nontrivial component of $G$, $\overline{G}_1$ the corresponding component of $G_1$ in $\overline{G}$ and $t > 0$ the number of trivial components in $G$. If $diam(\overline{G}_1) \leq 2$, then $h(G\overline{G}) \leq h(\overline{G}_1) + t$.
\end{theorem}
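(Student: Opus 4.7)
The plan is to construct an explicit hull set of size $h(\overline{G}_1)+t$. Let $\overline{S}_1$ be a minimum hull set of $\overline{G}_1$ and let $S_1 = \{u \in V(G_1) : \overline{u} \in \overline{S}_1\}$ be its image on the $G$-side under the matching correspondence. I would take $S = S_1 \cup \{v_1,\dots,v_t\}$, of size $h(\overline{G}_1)+t$. A motivating remark: placing these $h(\overline{G}_1)$ vertices on the $G$-side, rather than taking $\overline{S}_1$ itself, is essential, because a direct neighborhood analysis shows that no vertex of $V(G_1)$ can lie on a shortest path in $G\overline{G}$ between two vertices of $V(\overline{G}) \cup \{v_1,\dots,v_t\}$, so the iterated interval operator applied to such a set would never reach $V(G_1)$.

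I would then verify $H(S) = V(G\overline{G})$ in three stages. \textbf{Stage 1:} for each $u \in S_1$ and each trivial vertex $v_i$, I would check that $u\,\overline{u}\,\overline{v}_i\,v_i$ is the unique shortest $uv_i$-path in $G\overline{G}$, by enumerating common neighbors of $u$ and $\overline{v}_i$: only $\overline{u}$ qualifies, since $\overline{v}_i$'s only $G$-side neighbor is $v_i$ and $\overline{v}_i$ is universal on the $\overline{G}$-side. This forces $\overline{S}_1 \cup \{\overline{v}_1,\dots,\overline{v}_t\} \subseteq I[S]$. \textbf{Stage 2:} the hypothesis $diam(\overline{G}_1) \leq 2$ yields $d_{G\overline{G}}(\overline{a},\overline{b}) = d_{\overline{G}_1}(\overline{a},\overline{b}) \leq 2$ for all $\overline{a},\overline{b} \in V(\overline{G}_1)$ (any shortcut would have to be an edge already inside $\overline{G}_1$), so every shortest $\overline{G}_1$-path remains shortest in $G\overline{G}$; since $\overline{S}_1$ is a hull set of $\overline{G}_1$, iterating the interval operator yields $V(\overline{G}_1) \subseteq H(S)$. \textbf{Stage 3:} I would then establish the propagation claim that whenever $a \in V(G_1) \cap H(S)$ and $ax \in E(G_1)$, the unique shortest $a\overline{x}$-path is $a\,x\,\overline{x}$; the candidate alternative through $\overline{a}$ is ruled out because $ax \in E(G)$ implies $\overline{a}\overline{x} \notin E(\overline{G})$. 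Thus $x \in H(S)$, and iterating this claim along the edges of the connected graph $G_1$ starting from $S_1$ contaminates all of $V(G_1)$.

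The main obstacle is the bookkeeping of common neighbors in $G\overline{G}$ required for the uniqueness-of-shortest-path arguments in Stages 1 and 3. One must carefully distinguish $G$-side from $\overline{G}$-side neighbors, handle the matching edges correctly, and exploit that the vertices $\overline{v}_i$ corresponding to trivial components are universal on the $\overline{G}$-side. A single overlooked alternative shortest path would introduce a different intermediate vertex and could prevent the desired one from being forced into the interval. Once these uniqueness checks are in place, the three stages combine to give $H(S) = V(G\overline{G})$, and hence $h(G\overline{G}) \leq |S| = h(\overline{G}_1)+t$.
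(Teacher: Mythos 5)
Your proposal is correct and follows essentially the same route as the paper: the same hull set $S$ (the matched image of a minimum hull set of $\overline{G}_1$ on the $G$-side together with the trivial vertices), the same use of the paths $u\,\overline{u}\,\overline{v}_i\,v_i$ to contaminate $\overline{S}_1\cup\{\overline{v}_1,\dots,\overline{v}_t\}$, the same distance-preservation argument from $diam(\overline{G}_1)\leq 2$ to get $V(\overline{G}_1)\subseteq H(S)$, and the same edge-by-edge propagation $a\,x\,\overline{x}$ back into the connected component $G_1$. Your uniqueness checks and the remark explaining why the seeds must sit on the $G$-side are more detailed than the paper's write-up but do not change the argument.
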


\begin{proof}
Let $G$ be a disconnected graph with exactly one nontrivial connected component, denoted by $G_1$. Consider $v_1, \dots, v_t$ the vertices of the $t > 0$ trivial components of $G$ and $V(G_1) = \{u_1, \dots, u_p\}$.

Let $S_2$ a minimum hull set of $\overline{G}_1$. Suppose that $diam(\overline{G}_1) \leq 2$. We have that $d_{G\overline{G}}(\overline{u}_i,\overline{u}_j) \leq 2$, for every $\overline{u}_i,\overline{u}_j \in V(\overline{G}_1)$.
Consider $S \subseteq V(G\overline{G})$ such that $S = \{u_i \in V(G) : \overline{u}_i \in S_2 \} \cup \{v_1, \dots, v_t\}$.

Since $u \overline{u} \, \overline{v}_i v_i$ is a shortest path from $u$ to $v_i$, for every $u \in V(G_1)$ and $1 \leq i \leq t$, we have that $\overline{u}, \overline{v}_i \in I[S]$, for every $1 \leq i \leq t$ and $\overline{u} \in S_2$. Since $diam(\overline{G}_1) \leq 2$, we have that the distances of the vertices of $\overline{G}_1$ are the same in $G\overline{G}$, and since $S_2$ is a hull set of $\overline{G}_1$, then $H(S)$ contains $V(\overline{G}_1)$. Consequently $V(\overline{G}) \subseteq H(S)$.

Since $G_1$ is connected and $S \cap V(G_1) \neq \emptyset$, every uncontaminated vertex $u \in V(G)$ is contaminated by the vertices from a shortest path between $\overline{u}$ and its contaminated predecessor in $G_1$, then $H(S) = V(G\overline{G})$. Therefore $S$ is a hull set of $G\overline{G}$ and $h(G\overline{G}) \leq h(\overline{G}_1) + t$.

\end{proof}

Finally, considering that $diam(G_1) \leq 3$ and $diam(\overline{G}_1) \leq 2$, combining the first inequality of Theorem \ref{theo:limiteSuperiorGdesconexo} \eqref{theo:limiteSuperiorGdesconexoMenorIgual3} with the result stated in Theorem \ref{theo:limiteSuperiordiam2}, we can obtain a more restricted upper bound. Corollary \ref{cor:diam3e2} shows this result.

\begin{corollary} \label{cor:diam3e2}
Let $G$ be a disconnected graph with exactly one nontrivial connected component. Consider $G_1$ the nontrivial component of $G$, $\overline{G}_1$ the corresponding component of $G_1$ in $\overline{G}$ and $t > 0$ the number of trivial components in $G$. If $diam(G_1) \leq 3$ and $diam(\overline{G}_1) \leq 2$, then $h(G\overline{G}) \leq min\{h(G_1),h(\overline{G}_1)\} + t$.
\end{corollary}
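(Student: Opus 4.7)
The plan is to derive the stated bound as an immediate consequence of the two previously established upper bounds, without constructing a new hull set from scratch. The key observation is that the hypotheses of Corollary \ref{cor:diam3e2} simultaneously satisfy the hypotheses of Theorem \ref{theo:limiteSuperiorGdesconexo}~\eqref{theo:limiteSuperiorGdesconexoMenorIgual3} and of Theorem \ref{theo:limiteSuperiordiam2}. Concretely, the corollary assumes that $G$ is disconnected with exactly one nontrivial component $G_1$ and $t>0$ trivial components, and this is exactly the common ambient setting of both theorems.

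First, I would invoke Theorem \ref{theo:limiteSuperiorGdesconexo}~\eqref{theo:limiteSuperiorGdesconexoMenorIgual3} using the hypothesis $diam(G_1)\leq 3$; this yields the inequality $h(G\overline{G})\leq h(G_1)+t$. Second, I would invoke Theorem \ref{theo:limiteSuperiordiam2} using the hypothesis $diam(\overline{G}_1)\leq 2$; this yields $h(G\overline{G})\leq h(\overline{G}_1)+t$. Since both bounds hold simultaneously for the same complementary prism $G\overline{G}$, the hull number is bounded above by their minimum, i.e.\ $h(G\overline{G})\leq \min\{h(G_1),h(\overline{G}_1)\}+t$, which is exactly the claim of the corollary.

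There is essentially no obstacle to overcome: no new hull set needs to be exhibited, no case analysis on vertex neighborhoods is required, and no appeal to simpliciality or to the structure of shortest paths in $G\overline{G}$ beyond what the two parent theorems already encapsulate. The only mild point to verify is that both parent theorems are compatible in the precise sense that they both produce valid upper bounds under the given joint hypothesis; this is immediate because each theorem's hypothesis constrains only one of $G_1$ or $\overline{G}_1$, and the corollary's hypothesis imposes both constraints. Hence the proof is a one-line combination of the two cited results.
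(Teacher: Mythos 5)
Your proposal is correct and matches the paper exactly: the corollary is stated there as an immediate combination of Theorem \ref{theo:limiteSuperiorGdesconexo}~\eqref{theo:limiteSuperiorGdesconexoMenorIgual3} and Theorem \ref{theo:limiteSuperiordiam2}, with the minimum of the two upper bounds giving the claim. No further argument is needed.
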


\subsection{Cographs}

%the class of graphsforwhichnoinducedsubgraph is isomorphictoaP4
%In[5],Corneiletal.provedthatagraphG isacographifandonlyifG containsnoinduced P4(achordlesspathwithfour vertices

A \textit{cograph} is a graph with no induced $P_4$. As discussed in Introduction, Dourado et al. \cite{dourado2009computation} prove that deciding whether $h(G) \leq k$ is NP-complete. They also present polynomial-time algorithms for computing $h(G)$ when $G$ is a unit interval graph, a cograph or a split graph. We investigate the hull number of the complementary prisms of cographs, which resulted in Theorem \ref{theo:hullCograph}. %For that, we first present a useful property of cographs.

Since a nontrivial cograph $G$ is connected if and only if $\overline{G}$ is disconnected \cite{corneil1981complement, seinsche1984on}, we can use our results of complementary prisms of disconnected graphs to prove results for complementary prisms of cographs.

\begin{theorem} \label{theo:hullCograph}
Let $G$ be a connected cograph. Let $k$ be the number of nontrivial components of $\overline{G}$ and let $t$ be the number of trivial components of $\overline{G}$. Then:
\begin{enumerate}[label=(\roman*)]
\item $h(G\overline{G}) = t$, if $k = 0$; \label{theo:hullCographCaseA}
\item $t + 2 \leq h(G\overline{G}) \leq min\{h(G_1),h(\overline{G}_1)\} + t$, if $k = 1$, in which $\overline{G}_1$ is the nontrivial component of $\overline{G}$ and $G_1$ is the corresponding component of $\overline{G}_1$, in $G$; \label{theo:hullCographCaseB}
\item $h(G\overline{G}) = k + t + 1$, if $k \geq 2$. \label{theo:hullCographCaseC}
\end{enumerate}
\end{theorem}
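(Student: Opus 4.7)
The plan is to derive each case by applying the results on complementary prisms of disconnected graphs to $\overline{G}$, exploiting two structural facts: $h(G\overline{G}) = h(\overline{G}G)$, and a nontrivial connected cograph has disconnected complement. Thus $\overline{G}$ is disconnected with exactly $k$ nontrivial and $t$ trivial components, and the three items of the theorem correspond respectively to $k=0$, $k=1$, and $k\geq 2$.

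The two extreme cases are quick. For item~\ref{theo:hullCographCaseA}, $k=0$ forces $\overline{G}$ to be edgeless, so $G=K_t$, and Theorem~\ref{theo:duarte}(1) immediately gives $h(G\overline{G})=t$. For item~\ref{theo:hullCographCaseC}, $\overline{G}$ has $k+t\geq 2$ components with at least two of them nontrivial, so Theorem~\ref{theo:Gdesconexo} applied to $\overline{G}$ yields $h(G\overline{G})=(k+t)+1=k+t+1$.

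Item~\ref{theo:hullCographCaseB} carries the substance. First I would note that $t\geq 1$, since $t=0$ with $k=1$ would make $\overline{G}$ connected and contradict the cograph dichotomy. The lower bound $h(G\overline{G})\geq t+2$ then follows from Theorem~\ref{theo:limiteInferiorGdesconexo} applied to $\overline{G}$. For the upper bound the critical structural ingredient is the classical fact that every connected cograph has diameter at most $2$: a shortest path of length three has no chords and therefore induces a forbidden $P_4$. Since $\overline{G}_1$ is a connected induced subgraph of the cograph $\overline{G}$, it is itself a connected cograph, hence $diam(\overline{G}_1)\leq 2\leq 3$. Applying Theorem~\ref{theo:limiteSuperiorGdesconexo}\eqref{theo:limiteSuperiorGdesconexoMenorIgual3} to $\overline{G}$ then yields $h(G\overline{G})\leq h(\overline{G}_1)+t$. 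For the companion bound $h(G\overline{G})\leq h(G_1)+t$ I would work directly with a minimum hull set of $G_1$ taken componentwise (since $G_1=\overline{\overline{G}_1}$ is disconnected, as the complement of a connected cograph on at least two vertices) together with the $t$ trivial components of $\overline{G}$, using the join decomposition $\overline{G}_1=\overline{C}_1 * \cdots * \overline{C}_m$ of the components $C_1,\dots,C_m$ of $G_1$ to verify that a constant number of iterations of the interval operator fill $V(G\overline{G})$.

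The main obstacle is precisely this $h(G_1)+t$ bound. Because $G_1$ is disconnected, $diam(G_1)$ is infinite, and neither Theorem~\ref{theo:limiteSuperiordiam2} nor Corollary~\ref{cor:diam3e2} applies off the shelf. The argument must therefore be handled by a bespoke construction that exploits the join structure of $\overline{G}_1$ and the tight distance bounds that the complementary prism provides between a vertex and its matched copy, so that contamination propagates rapidly between the $V(G_1)$ and $V(\overline{G}_1)$ sides.
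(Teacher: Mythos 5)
Your overall route is the same as the paper's: use the fact that a nontrivial connected cograph has disconnected complement, identify $\overline{G}$ as a disconnected graph with $k$ nontrivial and $t$ trivial components, and apply Theorem~\ref{theo:duarte}(1) for $k=0$, Theorem~\ref{theo:Gdesconexo} for $k\geq 2$, and Theorem~\ref{theo:limiteInferiorGdesconexo} for the lower bound when $k=1$ (including the observation that $t=0$ would force $\overline{G}$ connected, hence $t\geq 1$). All of that is correct and coincides with the paper. For the upper bound $h(\overline{G}_1)+t$ your derivation via Theorem~\ref{theo:limiteSuperiorGdesconexo}\eqref{theo:limiteSuperiorGdesconexoMenorIgual3} and $diam(\overline{G}_1)\leq 2$ is clean and its hypothesis is genuinely satisfied; the paper instead extracts both halves of the minimum from Corollary~\ref{cor:diam3e2} in one stroke.

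The genuine gap is the companion bound $h(G\overline{G})\leq h(G_1)+t$: you correctly observe that $G_1$ is disconnected, so neither Theorem~\ref{theo:limiteSuperiordiam2} nor Corollary~\ref{cor:diam3e2} applies as stated, but you then only describe the shape of a ``bespoke construction'' (a componentwise minimum hull set of $G_1$ plus the $t$ trivial vertices of $\overline{G}$, propagated via the join structure of $\overline{G}_1$) without carrying it out. As written, that half of item~\ref{theo:hullCographCaseB} is asserted, not proved: you still need to verify that distances within each component $G_1^i$ (which have diameter at most $2$) are preserved in $G\overline{G}$ so that the hull of $S_1$ computed in $G\overline{G}$ swallows $V(G_1)$, and then that $V(\overline{G})$ is reached, e.g.\ via length-$3$ geodesics through the matching edges and the universal vertices of $G$ corresponding to the trivial components of $\overline{G}$. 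To your credit, you have put your finger on exactly the soft spot of the published argument: the paper itself invokes Corollary~\ref{cor:diam3e2} after substituting ``each component of $G_1$ has diameter at most $2$'' for the hypothesis $diam(G_1)\leq 3$, together with the identity $h(G_1)=\sum_{i=1}^{j}h(G_1^i)$, which is not a literal application of the corollary either. So your diagnosis is sharper than the paper's treatment, but a complete proof requires spelling out the construction you only sketch.
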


\begin{proof} 
\ref{theo:hullCographCaseA} If $k = 0$, $G$ is a complete graph on $t$ vertices and $G\overline{G} = K_t\overline{K}_t$, therefore $h(G\overline{G}) = t$ \cite{duarte2015prismas}.

\ref{theo:hullCographCaseB} Let $k = 1$. If $t = 0$, we have that $\overline{G}$ is connected, contradicting our assumption that $G$ is connected. Thus, we consider that $t > 0$. Since $G\overline{G}$ is isomorphic to $\overline{G}G$ and $\overline{G}$ is disconnected with one nontrivial component, Theorem \ref{theo:limiteInferiorGdesconexo} implies that $h(G\overline{G}) \geq t + 2$.
 
Since $G$ is a cograph, then $G_1$ is also a cograph. This implies that $\overline{G}_1$ is also a cograph and consequently $diam(\overline{G}_1) \leq 2$. Since $\overline{G}_1$ is a connected cograph, then $G_1$ is disconnected. This way, we consider the diameter of $G_1$ in terms of its $j$ connected components, denoted by $G_{1}^{i}$, $1 \leq i \leq j$.

Since $G_1$ is a cograph, every subgraph $G_{1}^{i}$, for every $1 \leq i \leq j$, are cographs, which implies that $diam(G_{1}^{i}) \leq 2$.  Since the hull number of a disconnected graph is equal to the sum of the hull numbers of its components, we have that $h(G_1) = \sum_{i = 1}^j h(G_{1}^{i})$.

Since $diam(G_1^i) \leq 2$, for every $1 \leq i \leq j$, and $diam(\overline{G}_1) \leq 2$, Corollary \ref{cor:diam3e2} implies that $h(G\overline{G}) \leq min\{\sum_{i = 1}^j h(G_{1}^{i}),h(\overline{G}_1)\} + t$, that is $h(G\overline{G}) \leq min\{h(G_1),h(\overline{G}_1)\} + t$. Therefore, if $k = 1$, then $t + 2 \leq h(G\overline{G}) \leq min\{h(G_1),h(\overline{G}_1)\} + t$. 

\ref{theo:hullCographCaseC} Let $k \geq 2$. Since $\overline{G}$ has at least two nontrivial connected components and $G\overline{G}$ is isomorphic to $\overline{G}G$, Theorem \ref{theo:Gdesconexo} implies that $h(G\overline{G}) = k + t + 1$.

\end{proof}

\subsection{Unlimited Geodetic Hull Number}

Unlike $P_3$-convexity, in which the hull number of complementary prisms $G\overline{G}$ when $G$ and $\overline{G}$ are connected is limited to $5$ \cite{duarte2015complexity}, in geodetic convexity the hull number of complementary prisms $G\overline{G}$ when $G$ and $\overline{G}$ are both connected cannot be limited. Theorem \ref{theo:Gconexo} express this result.

\begin{theorem}
\label{theo:Gconexo}
For every integer $n \geq 2$, there exist connected graphs $G$ and $\overline{G}$ such that $h(G\overline{G}) = n$. 
\end{theorem}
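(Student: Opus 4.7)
For each integer $n \geq 2$, I will exhibit a specific graph $G_n$ on $2n$ vertices whose complementary prism has hull number exactly $n$. Let $V(G_n) = \{u_1, \dots, u_n, v_1, \dots, v_n\}$ with $\{u_1, \dots, u_n\}$ inducing a clique and each $v_i$ a pendant attached only to $u_i$. Both $G_n$ and $\overline{G_n}$ are connected: $G_n$ obviously, and in $\overline{G_n}$ the vertices $\overline{v}_1, \dots, \overline{v}_n$ form a clique while each $\overline{u}_i$ is adjacent to every $\overline{v}_j$ with $j \neq i$, so (using $n \geq 2$) each $\overline{u}_i$ connects to the $\overline{v}$-clique.

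The upper bound $h(G_n\overline{G_n}) \leq n$ is realized by $S = \{v_1, \dots, v_n\}$. For $i \neq j$ the pair $v_i, v_j$ is at distance $3$ in $G_n\overline{G_n}$, with exactly the two shortest paths $v_i u_i u_j v_j$ and $v_i \overline{v}_i \overline{v}_j v_j$, so $I[S]$ already contains every $u_i$ and every $\overline{v}_i$. A single further application of the interval operator uses the pairs $(u_i, \overline{v}_j)$ with $j \neq i$, whose unique length-two shortest path in $G_n\overline{G_n}$ passes through $\overline{u}_i$ (because $u_i v_j \notin E(G_n)$), and so $I^{2}[S] = V(G_n\overline{G_n})$.

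The main obstacle is the lower bound $h(G_n\overline{G_n}) \geq n$. Consider the pairwise disjoint sets $B_i = \{u_i, v_i, \overline{u}_i\}$ for $i = 1, \dots, n$; it suffices to show that every hull set meets each $B_i$. Assume for contradiction that $S \cap B_i = \emptyset$ for some fixed $i$. The key step is to catalogue all shortest paths in $G_n\overline{G_n}$ (whose diameter is $3$, attained only by the pairs $(v_j, v_k)$ with $j \neq k$ and, for $n = 2$, by $(\overline{u}_1, \overline{u}_2)$) and extract three contamination dependencies: the vertex $v_i$ enters $I[x,y]$ as an intermediate only when $u_i \in \{x,y\}$; the vertex $\overline{u}_i$ only when $u_i \in \{x,y\}$; and the vertex $u_i$ only when $v_i$ or $\overline{u}_i$ belongs to $\{x,y\}$. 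These three rules form a cycle on $B_i$, so a straightforward induction on $k$ yields $I^{k}[S]\cap B_i = \emptyset$ for every $k \geq 0$; in particular $u_i \notin H(S)$, contradicting the fact that $S$ is a hull set. Combined with the upper bound, this gives $h(G_n\overline{G_n}) = n$. The bulk of the work, and the principal difficulty, lies in the distance and shortest-path enumeration underlying the three dependencies.
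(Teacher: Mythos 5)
Your proof is correct, but it takes a genuinely different route from the paper's. The paper attaches only two pendant vertices $v_1,v_2$ to a clique $K_n$ (so $|V(G)|=n+2$) and obtains most of the lower bound for free: the vertices $u_3,\dots,u_n$ are simplicial in $G$ and $\overline{u}_3,\dots,\overline{u}_n$ are simplicial in $\overline{G}$, so Lemma \ref{lema:simpliciaisPrisma} forces every hull set to meet each pair $\{u_i,\overline{u}_i\}$, $3\le i\le n$; the remaining effort goes into a case analysis showing that the pairs indexed by $1$ and $2$ cannot all be avoided. You instead take the corona of $K_n$ (a pendant $v_i$ on every clique vertex, $|V(G)|=2n$), where neither Lemma \ref{lema:simpliciais} nor Lemma \ref{lema:simpliciaisPrisma} applies ($v_i$ is not simplicial in $G\overline{G}$ and $\overline{v}_i$ is not simplicial in $\overline{G}$), and you replace the simplicial-vertex machinery by a self-contained ``dependency cycle'' argument on the $n$ disjoint triples $B_i=\{u_i,v_i,\overline{u}_i\}$: each element of $B_i$ can only be generated as an interior vertex of a geodesic one of whose endpoints already lies in $B_i$, so an untouched $B_i$ stays untouched forever. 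I checked the three dependency rules and the diameter/geodesic claims they rest on, and they hold (including the extra distance-$3$ pair $(\overline{u}_1,\overline{u}_2)$ when $n=2$, which your rules absorb correctly); the upper-bound set $\{v_1,\dots,v_n\}$ also works as you describe. What your approach buys is symmetry and independence from the simplicial lemmas, at the cost of a larger example and a shortest-path enumeration that you flag but do not fully write out; what the paper's approach buys is a smaller graph and reuse of its Lemma \ref{lema:simpliciaisPrisma}, at the cost of the somewhat delicate Observations 1 and 3. If you write yours up, the enumeration behind the three rules must be made explicit, since it is the entire content of the lower bound.
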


\begin{proof}
Suppose that $n = 2$. Let $G = P_4$ and $\overline{G} = \overline{P}_4$. By Theorem \ref{theo:duarte}  $h(P_4\overline{P}_4) = 2$. Therefore, for $n = 2$ the result holds.

Suppose that $n > 2$.
Let $K_n$ be a complete graph with $V(K_n) = \{u_1,u_2,...,u_n \}$. Let $G$ be a connected graph in which $V(G) = V(K_n) \cup \{v_1, v_2\}$ and $E(G) = E(K_n) \cup \{u_1v_1, u_2v_2\}$, see Figure \ref{fig:Gconexo}.

For the lower bound, let us prove that $h(G\overline{G}) \geq n$. We show that, for every $S \subseteq V(G\overline{G})$ such that $|S| \leq n-1$, $S$ is not a hull set $G\overline{G}$. For that, we verify that, if there exist two vertices $u_i,\overline{u}_i \in V(G\overline{G})$ such that $u_i,\overline{u}_i \notin S$, for some $3 \leq i \leq n$, then $u_i, \overline{u}_i$ will not belong to $H(S)$.

Let $S \subseteq V(G\overline{G})$ such that $|S| \leq n-1$. For every combination of $n-1$ vertices of $V(G\overline{G})$, at least a vertex $u_i$ and its corresponding vertex $\overline{u}_i$, for $1 \leq i \leq n$, will not belong to $S$, since $|V(K_n)| = |V(\overline{K}_n)| = n$. We consider some observations.

\bigskip

\noindent {\bf Observation 1:} {\it If $u_1, \overline{u}_1 \notin S$ ($u_2,\overline{u}_2 \notin S$), then there exist vertices $u_i,\overline{u}_i$ that do not belong to $S$, for $2 \leq i \leq n$ ($1 \leq i \leq n$, $i \neq 2$).}
\bigskip
 
\noindent Suppose that $u_1,\overline{u}_1 \notin S$.
For every combination $S$ with at most $n-1$ vertices of $V(G\overline{G})$, in which $u_1,\overline{u}_1$ do not belong to  $S$, we have that $u_1,\overline{u}_1$ can belong or not to $H(S)$. If $u_1,\overline{u}_1$ do not belong to $H(S)$, immediately $S$ is not a hull set of $G\overline{G}$. But if $u_1 \in H(S)$, we have that $u_1$ belongs to a shortest path between two other vertices that belong to $S$ or $I^\alpha[S]$. Since the distance between each pair of vertices $u_j, u_k \in V(K_n)$ is equal to $1$, then $u_1$ depends of $\overline{u}_1$ to belong to $H(S)$. 

For that $\overline{u}_1$ belongs to $H(S)$, $\overline{u}_1$ must belong to shortest path between two other vertices that belong to $S$ or $I^{\alpha-1}[S]$. Since $\overline{u}_1$ is adjacent only to $\{u_1, \overline{v}_2\}$, then for that $\overline{u}_1$ belongs to $I^\alpha[S]$, it is necessary that $u_1$ enters first in $I^{\alpha-1}[S]$ or $\overline{u}_1$ and $u_1$ enter together in $I^\alpha[S]$. Then if $u_1 \notin H(S)$, $\overline{u}_1 \notin H(S)$. 

Since  $\overline{u}_1 \notin S$, and since the distance from $\overline{v}_2$ to any neighbor of $u_1$ is equal to  $2$, then $u_1$ belongs to a shortest path between $v_1$ and $u_j$, for some $2 \leq j \leq n$, or between $v_1$ and $v_2$. 
Thus, if $u_1$ belongs to $H(S)$, then at least $v_1$ belongs to $I^\alpha[S]$. 
But since $v_1$ is adjacent only to $\{u_1, \overline{v}_1\}$, and the distance between $\overline{v}_1$ and any other vertex of $V(G\overline{G})$ is at most $2$, then $v_1$ must belong to $S$. This way, $S$ contains at most $n-2$ vertices of $V(K_n)\cup V(\overline{K}_n)$. Then at least a vertex $u_i$, for $2 \leq i \leq n$, and also its corresponding vertex $\overline{u}_i$ do not belong to $S$, since $|V(K_n)| = |V(\overline{K}_n)| = n$. 

The same argument can be applied to show that if $u_2,\overline{u}_2 \notin S$, then there exists vertices  $u_i,\overline{u}_i$ that do not belong to $S$, for $1 \leq i \leq n$, $i \neq 2$.

\bigskip

From the arguments of Observation 1 we conclude that, if $u_1,\overline{u}_1$ ($u_2,\overline{u}_2$) do not belong to $S$, it is possible that $u_1,\overline{u}_1$ ($u_2,\overline{u}_2$) belong to $H(S)$. If that happens, other vertices $u_i,\overline{u}_i$, for some $2 \leq i \leq n$, ($1 \leq i \leq n$, $i \neq 2$) must not belong to $S$.

\bigskip

\noindent {\bf Observation 3:} {\it If $u_1,u_2, \overline{u}_1,\overline{u}_2 \notin S$,  then there exist vertices $u_i,\overline{u}_i$ that do not belong to $S$, for $3 \leq i \leq n$.}

\bigskip

\noindent Consider that $u_1,\overline{u}_1,u_2,\overline{u}_2 \notin S$. We have that $u_1,u_2$ and their corresponding vertices $\overline{u}_1,\overline{u}_2$ can belong or not to $H(S)$. If those four vertices do not belong to $H(S)$, immediately $S$ is not a hull set of $V(G\overline{G})$. But if $u_1,u_2$ belong to $H(S)$, since the distance between each pair of vertices $u_j, u_k \in V(K_n)$ is equal to $1$, then $u_1$ does not belong to any shortest path between $u_j$ and $u_k$, and the same occurs with $u_2$.
Thus, we have that $u_1, u_2$ belong to the same shortest path between two other vertices that belong to $S$ or $I^\alpha[S]$. Since $\overline{u}_1,\overline{u}_2 \notin S$, then $u_1,u_2$ belong to a shortest path between $v_1$ and $v_2$, consequently $v_1,v_2 \in I^\alpha[S]$. But $v_1$ is adjacent only to $\{u_1,\overline{v}_1\}$ and $v_2$ is adjacent only to $\{u_2,\overline{v}_2\}$ and since $d_{G\overline{G}}(\overline{v}_1,\overline{v}_2) = 1$, then $v_1,v_2$ must belong to $S$. Thus, every combination of $S$, in this case, must contain $n-3$ vertices of $V(K_n)\cup V(\overline{K}_n)$, which implies that at least a vertex $u_i \in V(K_n)$, for some $3 \leq i \leq n$, as well as its corresponding vertex $\overline{u}_i \in V(\overline{K}_n)$, do not belong to $S$, since $|V(K_n)| = |V(\overline{K}_n)| = n$. 

\bigskip

So far we conclude that, if $|S| \leq n-1$, then some pair of vertices $u_i, \overline{u}_i$ must not belong to $S$. If $u_1, \overline{u}_1 \notin S$ or $u_2, \overline{u}_2 \notin S$ or $u_1,\overline{u}_1,u_2,\overline{u}_2 \notin S$, then $u_i, \overline{u}_i \notin S$, for $3 \leq i \leq n$.

Now, remains to show that, if $u_i,\overline{u}_i \notin S$, then $u_i,\overline{u}_i \notin H(S)$, for $3 \leq i \leq n$.

Since $u_i, \overline{u}_i$, for $3 \leq i \leq n$, are simplicial vertices in $G$ and $\overline{G}$, respectively, Lemma \ref{lema:simpliciaisPrisma} implies that every hull set $S$ of $G\overline{G}$ intersects $\{u_i, \overline{u}_i\}$. Thus, since $S \cap \{u_i, \overline{u}_i\} = \emptyset$, then $u_i,\overline{u}_i \notin H(S)$, a contradiction.

Therefore, every combination $S$, containing at most $n-1$ vertices of $V(G\overline{G})$, is not a hull set of $G\overline{G}$.
Consequently, $h(G\overline{G}) \geq n$.

For the upper bound, let $S = \{\overline{u}_1,\overline{u}_2,...,\overline{u}_n\}$. We have that $d_{G\overline{G}}(\overline{u}_1,\overline{u}_2) = 3$, then $u_1,u_2,\overline{v}_1, \overline{v}_2 \in I[S]$. Since $d_{G\overline{G}}(u_1, \overline{v}_1) = 2$ and $d_{G\overline{G}}(u_2, \overline{v}_2) = 2$, then $v_1, v_2 \in I^2[S]$. We also have that every $u_i$, for $3 \leq i \leq n$, is in a shortest path from $u_1$ to $\overline{u}_i$, then $u_i \in I^2[S]$. Since $V(G\overline{G}) = I^2[S] = H(S)$, then $S$ is a hull set of $G\overline{G}$. Therefore $h(G\overline{G}) = n$, which completes the proof.
\end{proof}

Figure \ref{fig:Gconexo} shows an example of the construction of the complementary prism $G\overline{G}$ presented in Theorem \ref{theo:Gconexo}, for $n = 6$. Black vertices represent a hull set of $G\overline{G}$.

\begin{figure}[htb]
\centering
{\setlength{\fboxsep}{9pt} % espacamento da margem em relacao a figura
\setlength{\fboxrule}{0.3pt} % tamanho da margem
\fbox{
\psfrag{G}{$G$} \psfrag{Gb}{$\overline{G}$}
\psfrag{u1}{$u_1$} \psfrag{ub1}{$\overline{u}_1$}
\psfrag{u2}{$u_2$} \psfrag{ub2}{$\overline{u}_2$}
\psfrag{ui}{$u_3$} \psfrag{ubi}{$\overline{u}_3$}
\psfrag{uk}{$u_4$} \psfrag{ubk}{$\overline{u}_4$}
\psfrag{uj}{$u_5$} \psfrag{ubj}{$\overline{u}_5$}
\psfrag{un}{$u_6$} \psfrag{ubn}{$\overline{u}_6$}
\psfrag{v1}{$v_1$} \psfrag{vb1}{$\overline{v}_1$}
\psfrag{v2}{$v_2$} \psfrag{vb2}{$\overline{v}_2$}
\includegraphics[width=0.42\textwidth]{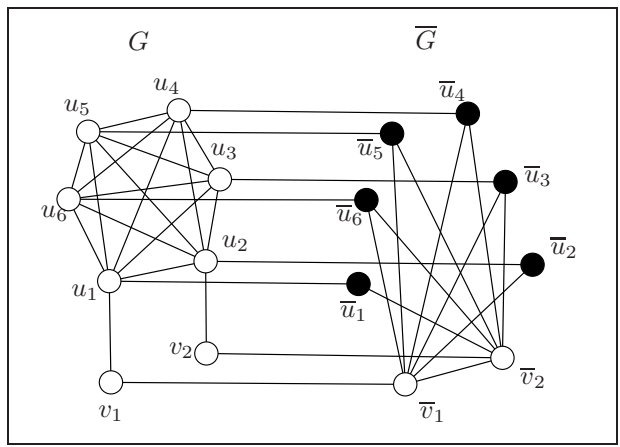}
}}
\caption{$G$ and $\overline{G}$ connected, with $h(G\overline{G}) = 6$.}
\label{fig:Gconexo}
\end{figure}

Finally, considering the complementary prism $G\overline{G}$ of the graphs or $G$ or $\overline{G}$ connected, the geodetic hull number also cannot be limited.  An example is the graph $S_n\overline{S}_n$ (Theorem \ref{theo:arvoresGeodesica}).

\section{Conclusions} 
\label{sec:conclusions}

We have considered  the geodetic hull number on complementary prisms of trees, disconnected graphs and cographs. We  have also shown that geodetic hull number on complementary prisms $G\overline{G}$ can be unlimited on connected graphs $G$ and $\overline{G}$, unlike what happens in $P_3$-convexity.

As future work, we suggest to determine the complexity of the following decision problem.

\begin{problem}
Let $k$ be a positive integer. Given a graph $G$, to decide whether the geodetic hull number of the complementary prism $G\overline{G}$ is at most $k$.
\end{problem}

\section*{Acknowledgement} 
\label{sec:acknowledgements}

The authors are partially supported by CAPES, CNPq, and FAPERJ.

% *****************************************************
%\begin{thebibliography}{10}\label{bibliography}
%
%\bibitem{BCDRS} 
%R.M. Barbosa, E.M.M. Coelho, M.C. Dourado, D. Rautenbach, and J.L. Szwarcfiter, 
%{\it On the Carath\'{e}odory number for the convexity of paths of order three},  
%SIAM J. Discrete Math. {\bf 26} (2012), 929-939.
%
%\end{thebibliography}
% *****************************************************

\bibliographystyle{abbrv}
\bibliography{bibliografia_convexidade}

\end{document}